\newtheorem{theorem}{Theorem}[section]
\newcommand{\A}{\mathbf A}
\newcommand{\bb}{\mathbf b}
\newcommand{\innerp}[1]{\langle {#1} \rangle}
\newcommand{\abs}[1]{\lvert#1\rvert}
\newcommand{\R}{{\mathbb R}}
\newcommand{\Z}{{\mathbb Z}}
\newcommand{\C}{{\mathbb C}}
\renewcommand{\eqref}[1]{(\ref{#1})}
\newcommand{\inner}[1]{\langle #1 \rangle}
\newcommand{\shsp}{\hspace{1em}}
\newcommand{\mhsp}{\hspace{2em}}
\newcommand{\rank}{{\rm rank}}
\newcommand{\supp}{{\rm supp}}
\newcommand{\tr}{{\rm tr}}
\newcommand{\vx}{{\mathbf x}}
\newcommand{\vy}{{\mathbf y}}
\newcommand{\va}{{\mathbf a}}
\newcommand{\vv}{{\mathbf v}}
\newcommand{\vu}{{\mathbf u}}
\newcommand{\vz}{{\mathbf z}}
\newcommand{\vw}{{\mathbf w}}
\newcommand{\vf}{{\mathbf f}}
\newcommand{\vg}{{\mathbf g}}
\renewcommand{\H}{{\mathbb H}}
\newcommand{\MM}{\mathbf M}
\newcommand{\ba}{\mathbf a}
\newcommand{\ii}{{\rm i}}
\newtheorem{prop}{Proposition}[section]
\newtheorem{lem}[prop]{Lemma}
\date{}
\begin{document}

\title{Phase Retrieval From the Magnitudes of Affine Linear Measurements}

\author{Bing Gao}
\address{LSEC, Inst.~Comp.~Math., Academy of
Mathematics and System Science,  Chinese Academy of Sciences, Beijing, 100091, China}
\email{gaobing@lsec.cc.ac.cn}

\author{Qiyu Sun}
\thanks{Qiyu Sun is partially supported by National Science Foundation (DMS-1412413)
    }
\address{Department of Mathematics
             University of Central Florida
Orlando, FL 32816,  USA
}
\email{qiyu.sun@ucf.edu}

\author{Yang Wang}
\thanks{Yang Wang was supported in part by the Hong Kong Research Grant Council grant 16306415 and 16317416 as well as the AFOSR grant FA9550-12-1-0455.}
\address{Department of Mathematics, the Hong Kong University of Science and Technology, Clear Water Bay, Kowloon, Hong Kong}
\email{yangwang@ust.hk}

\author{Zhiqiang Xu}
\thanks{Zhiqiang Xu was supported  by NSFC grant ( 11422113, 11021101, 11331012) and by National Basic Research Program of China (973 Program 2015CB856000)}
\address{LSEC, Inst.~Comp.~Math., Academy of
Mathematics and System Science,  Chinese Academy of Sciences, Beijing, 100091, China}
\email{xuzq@lsec.cc.ac.cn}

\subjclass[2010]{Primary 42C15}
\keywords{Phase retrieval; Frame; Sparse signals; Algebraic variety}

\begin{abstract}
In this paper, we consider the phase retrieval problem in which one aims to recover a signal from the magnitudes of affine measurements. Let $\{\ba_j\}_{j=1}^m \subset \H^d$ and $\bb=(b_1, \ldots, b_m)^\top\in\H^m$, where $\H=\R$ or $\C$. We say $\{\ba_j\}_{j=1}^m$ and $\bb$ are affine phase retrievable for $\H^d$ if any $\vx\in\H^d$ can be recovered from the magnitudes of the affine measurements $\{\abs{\innerp{\va_j,\vx}+b_j},\, 1\leq j\leq m\}$. We develop general framework for affine phase retrieval and prove necessary and sufficient conditions for $\{\ba_j\}_{j=1}^m$ and $\bb$ to be affine phase retrievable. We establish results on minimal measurements and generic measurements for affine phase retrieval as well as on sparse affine phase retrieval. In particular, we also highlight some notable differences between affine phase retrieval and the standard phase retrieval in which one aims to recover a signal $\vx$ from the magnitudes of its linear measurements. In standard phase retrieval, one can only recover $\vx$ up to a unimodular constant, while affine phase retrieval removes this ambiguity. We prove that unlike standard phase retrieval, the affine phase retrievable measurements $\{\ba_j\}_{j=1}^m$ and $\bb$ do not form an open set in $\H^{m\times d}\times \H^m$. Also in the complex setting, the standard phase retrieval requires  $4d-O(\log_2d)$ measurements, while the affine phase retrieval only needs $m=3d$ measurements.
\end{abstract}

\maketitle

\bigskip \medskip

\section{Introduction}
\setcounter{equation}{0}
\subsection{Phase retrieval}

Phase retrieval is an active topic of research in recent years as it arises in many different areas of studies (see e.g. \cite{BCE06,BoHa15,CSV12,CEHV13, CCSW, CCD, FMW14,HMW13} and the references therein). For a vector (signal) $\vx\in \H^d$, where $\H=\R$ or $\C$, the aim of phase retrieval is to recover $\vx$ from $\abs{\innerp{\ba_j,\vx}},\, j=1,\ldots,m$, where $\ba_j\in \H^d$ and we usually refer to $\{\ba_j\}_{j=1}^m$ as the {\em measurement vectors}. Since for any unimodular $c\in \H$, we have $\abs{\innerp{\ba_j,\vx}}=\abs{\innerp{\ba_j,c\vx}}$, the best outcome phase retrieval can achieve is to recover $\vx$ up to a unimodular constant.

We briefly overview some of the results in phase retrieval and introduce some notations. For the set of measurement vectors $\{\ba_j\}_{j=1}^m$, we set $\A:=(\ba_1,\ldots,\ba_m)^\top\in \H^{m\times d}$ which we shall refer to as the {\em measurement matrix}. We shall in general identify the set of measurement vectors $\{\ba_j\}_{j=1}^m$ with the corresponding measurement matrix $\A$, and often use the two terms interchangeably whenever there is no confusion.
Define the map $\MM_{\A}:\H^d\rightarrow \R^m_+$ by
\[
\MM_{\A}(\vx)\,\,=\,\, (\abs{\innerp{\ba_1,\vx}},\ldots, \abs{\innerp{\ba_m,\vx}}).
\]
We say $\A$ is {\em phase retrievable }  for $\H^d$ if $\MM_{\A}(\vx)=\MM_{\A}(\vy)$ implies  $\vx\in \{c \vy: c\in \H, \abs{c}=1\}$. There have been extensive studies of phase retrieval from various different angles. For example many efficient algorithms to recover $\vx$ from $\MM_\A(\vx)$ have been developed, see e.g. \cite{CSV12,CESV12, WF,PN13} and their references. One of the fundamental problems on the theoretical side of phase retrieval is the following question: {\em How many
vectors in the measurement matrix $\A$ are needed so that $\A$ is phase retrievable?} It is shown in \cite{BCE06} that for $\A$ to be phase retrievable for $\R^d$, it is necessary and sufficient  that $m \geq 2d-1$.

In the complex case $\H=\C$, the same question becomes much more challenging, however. The minimality question remains open today. Balan, Casazza and Edidin \cite{BCE06}  first show that $\A$ is phase retrievable if it contains $m\geq 4d-2$ generic vectors in $\C^d$. Bodmann and Hammen \cite{BoHa15} show that $m=4d-4$ measurement vectors are possible for phase retrieval through construction (see also Fickus, Mixon, Nelson and Wang \cite{FMW14}). Bandeira, Cahill, Mixon and Nelson \cite{BCMN} conjecture that (a) $m\geq 4d-4$ is necessary for $\A$ to be phase retrievable and, (b) $\A$ with $m\geq 4d-4$ generic measurement vectors is  phase retrievable.
Part (b) of the conjecture is proved by Conca,  Edidin,   Hering and  Vinzant \cite{CEHV13}.
They also confirm  part (a) for the case where $d$ is in the form of $2^k+1, \, k\in \Z_+$.  However, Vinzant in \cite{small} presents a phase retrievable $\A$ for $\C^4$ with $m=11=4d-5<4d-4$ measurement vectors, thus disproving the conjecture. The measurement vectors in the counterexample are obtained using Gr\"obner basis and algebraic computation.

\subsection{Phase retrieval from magnitudes of affine linear measurements}

Here we consider the affine phase retrieval problem, where instead of being given the magnitudes of linear measurements, we are given the magnitudes of affine linear measurements that include shifts. More precisely, instead of recovering $\vx$ from $\{|\innerp{\ba_j,\vx}|\}_{j=1}^m$, we consider recovering $\vx$ from the absolute values of the affine linear measurements
\[
\abs{\innerp{\ba_j,\vx}+b_j},\quad j=1,\ldots,m,
\]
where $\ba_j\in \H^d$, $\bb=(b_1,\ldots,b_m)^\top\in \H^m$. Unlike in the classical phase retrieval, where $\vx$ can only be recovered up to a unimodular constant, we will show that one can recover $\vx$ {\em exactly} from $(\abs{\innerp{\ba_1,\vx}+b_1},\ldots,\abs{\innerp{\ba_m,\vx}+b_m})$ if the vectors $\ba_j$ and shifts $b_j$ are properly chosen.

Let $\A = (\ba_1,\ldots,\ba_m)^\top\in \H^{m\times d}$ and $\bb\in\H^m$.
Define the map $\MM_{\A,\bb}:\H^d\rightarrow \R^m_+$ by
\begin{equation}  \label{eq:M1}
\MM_{\A,\bb}(\vx)=\left(\abs{\innerp{\ba_1,\vx}+b_1},\ldots, \abs{\innerp{\ba_m,\vx}+b_m}\right).
\end{equation}
We say the pair $(\A,\bb)$ (which can also be viewed as a matrix in $\H^{m \times (d+1)}$) is {\em affine phase retrievable} for $\H^d$,  or simply {\em phase retrievable} whenever there is no confusion, if $\MM_{\A,\bb}$ is injective on $\H^d$. Note that sometimes it is more convenient to consider the map
\begin{equation}  \label{eq:M2}
       \MM^2_{{\bf A},{\bf b}}(\vx):=(|\innerp{\ba_1,\vx} +b_1|^2, \ldots, |\innerp{\ba_m,\vx} +b_m|^2).
\end{equation}
Clearly $(\A,\bb)$ is affine phase retrievable if and only if $\MM^2_{\A,\bb}$ is injective on $\H^d$. The goal of this paper is to develop a framework of affine phase retrieval.

There are several motivations for studying affine phase retrieval. It arises naturally in holography, see e.g. \cite{Lie03}. It could also arise in other phase retrieval applications, such as reconstruction of signals in a shift-invariant space from their phaseless samples \cite{CCSW}, where some entries of $\vx$ might be known in advance. In such scenarios, assume that the object signal is $\vy\in \H^{d+k}$ and the first $k$ entries of $\vy$ are known. We can write
$\vy=(y_1,\ldots,y_k, \vx)$, where $y_1,\ldots,y_k$ are known and $\vx\in \H^d$. Suppose that ${\tilde \ba}_j=(a_{j1},\ldots,a_{jk},\ba_j)\in \H^{d+k},\, j=1,\ldots,m$ are the measurement vectors. Then
\[
    \abs{\innerp{{\tilde \ba}_j,\vy}}\,\,=\,\,\abs{\innerp{\ba_j,\vx}+b_j},
\]
where $b_j:=a_{j1}y_1+\cdots+a_{jk}y_k$. So if $(y_1,\ldots,y_k)$ is a nonzero vector, we can take advantage of knowing the first $k$ entries and reduce the standard phase retrieval in $\H^{d+k}$ to affine phase retrieval  in $\H^d$.

\subsection{Our contribution }

This paper considers affine phase retrieval for both real and complex signals.
In Section 2, we  consider the real case $\H=\R$ and prove several necessary and sufficient conditions under which $\MM_{\A,\bb}$ is injective on $\R^d$. For an index set $T\subset \{1,\ldots,m\}$, we use $\A_T$ to denote the sub-matrix $\A_T:=(\ba_j: j\in T)^\top$ of $\A$. Let $ \#T $ denote the cardinality of $ T $, ${\rm span}(\A_T)\subset \R^{\#T}$ denote the subspace  spanned by the column vectors of $\A_T$. In particularly, we show that $(\A,\bb)$  is  affine phase retrievable  for $\R^d$ if and only if ${\rm span}\{\ba_j:j\in S^c\}=\R^d$ for any index set $S\subset \{1,\ldots,m\}$ satisfying $\bb_S\in {\rm span}(\A_S)$.  Based on this result, we prove that the measurement vectors set $\A$ must have at least $m\geq 2d$ elements for  $(\A,\bb)$ to be affine phase retrievable. Furthermore, we prove any generic $\A\in \R^{m\times d}$ and $\bb\in\R^m$, where  $m\geq 2d$ will be affine phase retrievable. The recovery of sparse signals from phaseless measurement also attracts much attention recently \cite{WaXu14,GWaXu16}. In this section, we consider the real affine phase retrieval  for sparse vectors.

We turn to the complex case $\H=\C$  in Section 3. First we establish equivalent necessary and sufficient conditions for $(\A,\bb)$ to be affine phase retrievable for $\C^d$. Using these conditions, we show that $(\A,\bb)\in \C^{m\times (d+1)}$ is {\em not} affine phase retrievable for $\C^d$ if  $m<3d$.  The result is sharp as we also construct an affine phase retrievable $(\A,\bb)$ for $\C^d$ with $m=3d$. This result shows that the nature of affine phase retrieval can be quite different from that of the standard  phase retrieval in the complex setting, where it is known that  $4d -O(\log_2d)$ measurements are needed for phase retrieval \cite{HMW13,WaXu16}.

Note that for $ j=1,\ldots,m$ we have
\begin{equation}\label{eq:shiftphase}
    \abs{\innerp{\ba_j,\vx}+b_j}\,\,=\,\,\abs{\innerp{{\tilde \ba}_j,{\tilde \vx}}}, ~~\mbox{where}~~
      	{\tilde \vx}=\begin{pmatrix} \vx \\ 1\end{pmatrix},
  		{\tilde \ba}_j = \begin{pmatrix} \ba_j \\ b_j\end{pmatrix}.
\end{equation}
It shows that affine phase retrieval for $\vx$ can be reduced to the classical phase retrieval for ${\tilde \vx}\in \C^{d+1}$ from $\abs{\innerp{{\tilde \ba}_j,{\tilde \vx}}}, \,j=1,\ldots,m$. Because the last entry of $\tilde \vx$ is $1$, it allows us to recover $\vx$ without the unimodular constant ambiguity. Observe also from \cite{CEHV13} that  $4(d+1)-4=4d$ generic measurements are enough to recover $\tilde \vx$ up to a unimodular constant, and hence they are also enough to recover $\vx$. In Section 3, we  prove the stronger result that  a generic $(\A,\bb)$ in $\C^{m\times (d+1)}$  with $m\geq 4d-1$ is affine phase retrievable.

The classical phase retrieval has the property that the set of phase retrievable $\A\in \H^{m\times d}$ is an open set, and hence the phase retrievable property is stable under small perturbations \cite{RaduSt, BaWaSt}. Surprisingly, viewing $(\A,\bb)$ as an element in $\H^{m\times (d+1)}$, we prove that the set of affine phase retrievable $(\A,\bb)$ is {\em not} an open set. As far as stability of affine phase retrieval  is concerned, we prove several new results in Section 4. For the standard phase retrieval, one uses $\min_{\abs{\alpha}=1}\|\vx-\alpha \vy\|$ to measure the distance between $\vx$ and $\vy$. The robustness of phase retrieval is established via the lower bound of the following bi-Lipschitz type inequalities for any phase retrievable $\A$,
\begin{equation}\label{eq:stability}
   c\min_{\alpha\in \C, \abs{\alpha}=1}\|\vx-\alpha \vy\|\,\leq\, \|\MM_\A(\vx)-\MM_\A(\vy)\|
   \,\leq\, C\min_{\alpha\in \C, \abs{\alpha}=1}\|\vx-\alpha \vy\|,
\end{equation}
where $c, C>0$ depends only on $\A$ \cite{CCD}. More explicit estimate of the constant $c$ was given in\cite{BaWaSt}. For the affine phase retrieval,  we use $\|\vx-\vy\|$ to measure the distance between $\vx$ and $\vy$ because it is possible to recover $\vx$ exactly in the affine phase retrieval. For the affine phase retrieval, we show that both $\MM_{\A,\bb}$ and $\MM^2_{\A,\bb}$ are bi-Lipschitz continuous on any compact sets, but are not bi-Lipschitz on $\H^d$.

\section{Affine Phase Retrieval for Real Signals}
\setcounter{equation}{0}

We consider affine phase retrieval  of real signals in this section. Several equivalent conditions for affine phase retrieval  are established. We also study affine phase retrieval for sparse signals. In particular we answer the minimality question, namely what is the smallest number of measurements needed for affine phase retrievability for $\R^d$.

\subsection{Real affine phase retrieval}
Let $T\subset \{1,2, \ldots, m\}$. We first recall that for the measurement matrix ${\bf A}=(\ba_1,\ldots,\ba_m)^\top\in \R^{m\times d}$, we use $\A_T$ to denote the submatrix of $\A$ consisting only those rows indexed in $T$, i.e. $\A_T:=(\ba_j: j\in T)^\top$. Similarly we use $\bb_T$ to denote the sub-vector of $\bb$ consisting only entries indexed in $T$. For any matrix ${\bf B}$, we use ${\rm span} ({\bf B})$ to denote the subspace spanned by the {\em columns} of ${\bf B}$. Thus for any index subset $T$, the notation ${\rm span}({\bf A}_T)$ denotes the subspace of $\R^{\# T}$ spanned by the columns of $\A_T$.

\begin{theorem}\label{le:real}
   Let  ${\bf A}=(\ba_1,\ldots,\ba_m)^\top\in \R^{m\times d}$ and ${\bb}=(b_1, \ldots, b_m)^\top\in \R^m$. Then the followings are equivalent:
   \begin{itemize}
   \item[\rm (A)] $(\A,\bb)$ is affine phase retrievable  for $\R^d$.
   \item[\rm (B)] The map $\MM^2_{{\bf A},{\bf b}}$ is injective on $\R^d$, where $\MM^2_{{\bf A},{\bf b}}$ is defined in (\ref{eq:M2}).
   \item[\rm (C)] For any $\vu,\vv\in\R^d$ and $\vu\neq 0$, there exists a $k$ with $1\leq k \leq m$ such that
      $$
           \innerp{\ba_k,\vu}\bigl(\innerp{\ba_k,\vv} +b_k\bigr) \neq 0.
      $$
   \item[\rm (D)] For any $S\subset \{1,2, \ldots, m\}$, if ${\bf b}_S\in {\rm span}({\bf A}_S)$ then ${\rm span} (\A_{S^c}^\top)=  {\rm span}\{\ba_{j}:j\in S^c\}=\R^d$.
   \item[\rm (E)] The Jacobian $J(\vx)$ of the map $\MM^2_{{\bf A},{\bf b}}$ has rank $d$ for all $\vx\in\R^d$.
   \end{itemize}
\end{theorem}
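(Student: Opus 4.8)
The plan is to prove the five statements equivalent by establishing the four consecutive links (A)$\Leftrightarrow$(B), (B)$\Leftrightarrow$(C), (C)$\Leftrightarrow$(D), (D)$\Leftrightarrow$(E), and then invoking transitivity. The first link is essentially definitional: for real scalars $|t|=|s|$ precisely when $t^2=s^2$, so $\MM_{\A,\bb}(\vx)=\MM_{\A,\bb}(\vy)$ holds if and only if $\MM^2_{\A,\bb}(\vx)=\MM^2_{\A,\bb}(\vy)$, whence injectivity of one map is equivalent to injectivity of the other. This is already noted after (\ref{eq:M2}).

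For (B)$\Leftrightarrow$(C), the key device is the difference-of-squares factorization. Setting $t_k=\innerp{\ba_k,\vx}+b_k$ and $s_k=\innerp{\ba_k,\vy}+b_k$, one has
\[
  t_k^2-s_k^2=\innerp{\ba_k,\vx-\vy}\,\bigl(\innerp{\ba_k,\vx+\vy}+2b_k\bigr).
\]
I would then make the change of variables $\vu=\vx-\vy$ and $\vv=\tfrac12(\vx+\vy)$, which is a bijection of $\R^d\times\R^d$ onto itself sending exactly the pairs with $\vx\neq\vy$ to the pairs with $\vu\neq\vO$. Under it the system $t_k^2=s_k^2$ for all $k$ becomes $\innerp{\ba_k,\vu}(\innerp{\ba_k,\vv}+b_k)=0$ for all $k$. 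Thus $\MM^2_{\A,\bb}$ fails to be injective if and only if there exist $\vu\neq\vO$ and $\vv$ annihilating every product, which is exactly the negation of (C).

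The link (C)$\Leftrightarrow$(D) I would argue through contrapositives, using an index set to record where the second factor vanishes. If $\vu\neq\vO$ and $\vv$ witness the failure of (C), put $S=\{k:\innerp{\ba_k,\vv}+b_k=0\}$. On $S$ the relation $\A_S\vv=-\bb_S$ gives $\bb_S\in{\rm span}(\A_S)$, while on $S^c$ the vanishing of the product forces $\innerp{\ba_k,\vu}=0$, so the nonzero $\vu$ is orthogonal to $\{\ba_j:j\in S^c\}$ and hence ${\rm span}\{\ba_j:j\in S^c\}\neq\R^d$; this negates (D). Conversely, from an $S$ violating (D) one solves $\A_S\vv=-\bb_S$ (solvable because $\bb_S\in{\rm span}(\A_S)$) and chooses any nonzero $\vu\perp\{\ba_j:j\in S^c\}$; every product then vanishes, contradicting (C). Note that the vanishing set of the chosen $\vv$ may be strictly larger than $S$, but this does not affect the argument since all products still vanish.

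Finally, for (D)$\Leftrightarrow$(E) I would compute the Jacobian explicitly. Each coordinate $(\innerp{\ba_k,\vx}+b_k)^2$ has gradient $2(\innerp{\ba_k,\vx}+b_k)\ba_k$, so the $k$-th row of $J(\vx)$ is a scalar multiple of $\ba_k^\top$ and $\rank J(\vx)=d$ exactly when ${\rm span}\{\ba_k:\innerp{\ba_k,\vx}+b_k\neq0\}=\R^d$. A rank deficiency at some $\vx$ then produces, with $S=\{k:\innerp{\ba_k,\vx}+b_k=0\}$, both $\bb_S\in{\rm span}(\A_S)$ (from $\A_S\vx=-\bb_S$) and ${\rm span}\{\ba_j:j\in S^c\}\neq\R^d$, i.e.\ the failure of (D); conversely the vector $\vv$ constructed in the failure of (D) is precisely a point where $J$ drops rank. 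I expect the main obstacle to be not any single deep step but the quantifier-and-index bookkeeping in (C)$\Leftrightarrow$(D), together with the need to verify in (B)$\Leftrightarrow$(C) that the substitution $\vu=\vx-\vy,\ \vv=\tfrac12(\vx+\vy)$ is a genuine bijection matching $\vx\neq\vy$ with $\vu\neq\vO$; once these reformulations are set up correctly, each individual implication is routine.
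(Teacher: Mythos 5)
Your proposal is correct, and its core machinery coincides with the paper's: the difference-of-squares factorization $t_k^2-s_k^2=\innerp{\ba_k,\vx-\vy}\bigl(\innerp{\ba_k,\vx+\vy}+2b_k\bigr)$ with a linear change of variables, the index set $S=\{k:\innerp{\ba_k,\vv}+b_k=0\}$, and the observation that the rows (or columns) of the Jacobian are scalar multiples of the $\ba_k$. The one structural difference is how (E) is attached to the chain: the paper proves (C)$\Leftrightarrow$(E) by noting that $\rank(J(\vv))<d$ exactly when some nonzero $\vu$ satisfies $\vu^\top J(\vv)=0$, which reproduces the products in (C) verbatim, whereas you prove (D)$\Leftrightarrow$(E) by characterizing $\rank\, J(\vx)=d$ as ${\rm span}\{\ba_k:\innerp{\ba_k,\vx}+b_k\neq 0\}=\R^d$ and then matching rank-deficient points $\vx$ with violating index sets $S$ via $\A_S\vx=-\bb_S$. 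Both links are routine and valid; the paper's route makes the bilinear condition of (C) appear immediately in the kernel condition of the Jacobian, while yours reuses the index-set bookkeeping already set up for (C)$\Leftrightarrow$(D) and pinpoints geometrically where the Jacobian drops rank. A small bonus of your write-up is that it spells out both directions of (C)$\Leftrightarrow$(D) --- including the correct remark that the vanishing set of the chosen $\vv$ may properly contain $S$ without harming the argument --- where the paper settles for ``the converse clearly also holds.''
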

\begin{proof}   The equivalence of (A) and (B) have already been discussed earlier. We focus on the other conditions.

\noindent
(A) $\Leftrightarrow$ (C).~~Assume that $\MM_{{\bf A},{\bf b}}({\bf x})=\MM_{{\bf A},{\bf b}}({\bf y})$ for some ${\bf x} \neq {\bf  y}$ in $\R^d$. For any $j$, we have
\begin{equation} \label{eq:deng}
    \abs{\innerp{\ba_j,{\bf x}}+b_j}^2-\abs{\innerp{\ba_j,{\bf y}}+b_j}^2 =\innerp{\ba_j, {\bf x}-{\bf y}}(\innerp{\ba_j,{\bf x}+{\bf y}}+2b_j).
\end{equation}
Set $2\vu = \vx-\vy$ and $2\vv = \vx+\vy$. Then $\vu \neq 0$ and for all $j$,
\begin{equation} \label{eq:innerp_zero}
      \innerp{\ba_j,\vu}\bigl(\innerp{\ba_j,\vv} +b_j\bigr) = 0.
\end{equation}
Conversely, assume that (\ref{eq:innerp_zero}) holds for all $j$. Let $\vx,\vy\in\R^d$ be given by $\vx-\vy=2\vu$ and $\vx+\vy =2\vv$.
Then $\vx\neq \vy$. However, we would have $\MM^2_{{\bf A},{\bf b}}({\bf x})=\MM^2_{{\bf A},{\bf b}}({\bf y})$ and hence $(\A,\bb)$ cannot be affine phase retrievable.

\noindent
(C) $\Leftrightarrow$ (D).~~Assume that (C) holds. If for some $S\subset \{1,2, \ldots, m\}$ with ${\bf b}_S\in {\rm span}({\bf A}_S)$, we have ${\rm span}\{\ba_{j}: ~j\in S^c\}\neq \R^d$, then we can find $\vu \neq 0$ such that $\innerp{\ba_j,\vu} = 0$ for all $j\in S^c$. Moreover, since ${\bf b}_S\in {\rm span}({\bf A}_S)$, we can find $\vv\in\R^d$ such that $-b_j = \innerp{\ba_j,\vv} $ for all $j\in S$. Thus for all $1 \leq j \leq m$, we have
$$
        \innerp{\ba_j,\vu}\bigl(\innerp{\ba_j,\vv} +b_j\bigr) = 0.
$$
This is a contradiction. The converse clearly also holds.

\noindent
(C) $\Leftrightarrow$ (E).~~ Note that the Jacobian $J(\vv)$ of the map $\MM^2_{{\bf A},{\bf b}}$ at the point $\vv\in\R^d$ is precisely
$$
       J(\vv) = \Bigl((\innerp{\ba_1,\vv} +b_1)\ba_1, (\innerp{\ba_2,\vv} +b_2)\ba_2, \ldots, (\innerp{\ba_m,\vv} +b_m)\ba_m\Bigr),
$$
i.e. the $j$-th column of $J(\vv)$ is $(\innerp{\ba_j,\vv} +b_j)\ba_j$. Thus $\rank(J(\vv))\neq d$ if and only if there exists a nonzero $\vu\in\R^d$ such that
$$
    \vu^\top J(\vv)=\Bigl(\innerp{\ba_1,\vu}\bigl(\innerp{\ba_1,\vv} +b_1\bigr), \ldots, \innerp{\ba_m,\vu}\bigl(\innerp{\ba_m,\vv} +b_m\bigr)\Bigr)=0.
$$
The equivalence of (C) and (E) now follows.
\end{proof}

As an application of Theorem \ref{le:real}, we show that the
minimal number of affine measurements to recover all $d$-dimensional real signals is $2d$.

\begin{theorem}\label{realminimal.thm1}
Let  ${\bf A}=(\ba_1,\ldots,\ba_m)^\top\in \R^{m\times d}$ and ${\bf b}\in \R^m$. If $m\leq 2d-1$, then $(\A,\bb)$ is not affine phase retrievable  for $\R^d$.
\end{theorem}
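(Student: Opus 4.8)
The plan is to derive the result from the equivalent characterization in Theorem~\ref{le:real}, specifically condition~(D). That condition asserts that $(\A,\bb)$ is affine phase retrievable if and only if every index set $S\subset\{1,\ldots,m\}$ with $\bb_S\in{\rm span}(\A_S)$ satisfies ${\rm span}\{\ba_j:j\in S^c\}=\R^d$. Hence, to prove non-retrievability when $m\le 2d-1$, it suffices to exhibit a single index set $S$ for which $\bb_S\in{\rm span}(\A_S)$ holds while ${\rm span}\{\ba_j:j\in S^c\}\neq\R^d$.

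For the construction, let $r=\rank(\A)$ and take $S$ to be the index set of a maximal linearly independent subset of the rows $\ba_1,\ldots,\ba_m$, so that $\#S=r$. Then $\A_S$ is an $r\times d$ matrix of rank $r$, i.e.\ it has full row rank, so its columns span all of $\R^r$; consequently $\bb_S\in{\rm span}(\A_S)$ automatically, \emph{for every} shift vector $\bb$. This settles the first requirement uniformly in $\bb$. In the dual language of condition~(C), the same fact says that the linear system $\innerp{\ba_j,\vv}=-b_j$, $j\in S$, is always solvable for $\vv$.

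The dimension count is the crux. The rows indexed by $S^c$ all lie in the row space of $\A$, which has dimension $r$, and there are only $\#S^c=m-r$ of them; therefore $\dim{\rm span}\{\ba_j:j\in S^c\}\le\min(m-r,\,r)$. The key observation is that $m\le 2d-1$ forces $\min(m-r,\,r)<d$: were both $r\ge d$ and $m-r\ge d$, we would get $m\ge 2d$, a contradiction. Thus ${\rm span}\{\ba_j:j\in S^c\}\neq\R^d$, condition~(D) is violated, and $(\A,\bb)$ is not affine phase retrievable.

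The main subtlety to watch is that this \emph{single} choice of $S$ must work across all possible ranks at once. When $r=d$ the reason $S^c$ fails to span is one of cardinality, since $\#S^c=m-d\le d-1$; when $r<d$ it is instead that all rows already lie in a proper subspace. The unified bound $\dim\le\min(m-r,r)<d$ packages both cases, and this is where the hypothesis $m\le 2d-1$ enters decisively. One should also verify the degenerate endpoint $r=0$ (where $S=\emptyset$ and $\bb_S\in{\rm span}(\A_S)$ holds vacuously), and confirm that full row rank of $\A_S$ is precisely what guarantees $\bb_S\in{\rm span}(\A_S)$ for an arbitrary $\bb$, which is the feature that distinguishes the affine problem from the homogeneous one.
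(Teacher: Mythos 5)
Your proof is correct and follows essentially the same route as the paper: both reduce the statement to Theorem \ref{le:real} and hinge on the same counting, namely that choosing a full-rank set of rows $S$ makes $\bb_S\in{\rm span}(\A_S)$ automatic, while the remaining $m-\#S$ rows cannot span $\R^d$ when $m\le 2d-1$. The only difference is cosmetic: the paper splits into the cases $\rank(\A)\le d-1$ (handled by a direct non-injectivity argument) and $\rank(\A)=d$ (handled via condition (C) with a set $S_0$ of size $d$), whereas you unify both cases by taking $\#S=\rank(\A)=r$ and invoking condition (D) together with the bound $\dim{\rm span}\{\ba_j:j\in S^c\}\le\min(m-r,r)<d$.
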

\begin{proof} We divide the proof into two cases.

\noindent
{\em Case 1}:  ${\rm rank}({\bf A})\leq d-1$.

In this case,  there exists a nonzero vector ${\bf u}\in \R^d$ such that
$ \langle {\bf a}_j, {\bf u}\rangle =0, \ 1\le j\le m$.
Thus for any ${\bf x}\in \R^d$,
$$
    \abs{\innerp{\ba_j,{\bf x}}+b_j}^2=\abs{\innerp{\ba_j,{\bf x}+{\bf u}}+b_j}^2, \quad 1\le j\le m.
$$
This means that  $\MM_{\A,\bb}$ is not injective.

\noindent
{\em Case 2}:  ${\rm rank}({\bf A})= d$.

In this case, there exists an index set $S_0\subset \{1,\ldots,m\}$ with cardinality $d$ so that the square matrix ${\bf A}_{S_0}$ has full rank $d$, which implies
\begin{equation} \label{realminimal.thm1.pf.eq1}
     {\bf b}_{S_0}\in {\rm span}({\bf A}_{S_0}).
\end{equation}
In other words, there exists $\vv \in \R^d$ such that $\innerp{\ba_j,\vv}+b_j=0$ for all $j \in S_0$. Now since $m\le 2d-1$ and $\# S_0=d$, we have $\#S_0^c=m-d\leq d-1$. Hence there exists a nonzero $\vu\in\R^d$ such that $\vu \perp \{\ba_j:~j\in S_0^c\}$. The non-injectivity follows immediately from Theorem \ref{le:real} (C).
\end{proof}

We next consider generic measurements. There are various ways one can define the meaning of being generic. A rigorous definition involves the use of Zariski topology. In this paper, we adopt a simpler definition. An element $\vu\in \H^N$ is generic, if $\vu \in X$ for some dense open set $X$ in $\H^N$ such that $X^c$ is a null set. Sometimes in actual proofs, we obtain the stronger result where $X^c$ is a real algebraic variety. The following theorem on generic measurements also shows that the lower bound given in Theorem \ref{realminimal.thm1} is optimal.

\begin{theorem}\label{realminimal.thm2}
    Let $ m\geq 2d$. Then  a generic $(\A,\bb) \in \R^{m\times (d+1)}$  is affine phase retrievable.
\end{theorem}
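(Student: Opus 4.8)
The plan is to read off the non-phase-retrievable pairs from the equivalent condition (D) of Theorem~\ref{le:real} and to exhibit them as a finite union of measure-zero algebraic sets. By Theorem~\ref{le:real}(D), a pair $(\A,\bb)$ \emph{fails} to be affine phase retrievable for $\R^d$ precisely when there is an index set $S\subset\{1,\ldots,m\}$ with $\bb_S\in{\rm span}(\A_S)$ and ${\rm span}\{\ba_j:j\in S^c\}\neq\R^d$. Hence the bad set is
\[
   \mathcal{N}=\bigcup_{S\subset\{1,\ldots,m\}} B_S,\qquad
   B_S=\bigl\{(\A,\bb):\ \bb_S\in{\rm span}(\A_S)\ \text{and}\ \rank(\A_{S^c})<d\bigr\}.
\]
Each defining condition is polynomial: $\rank(\A_{S^c})<d$ is the simultaneous vanishing of all $d\times d$ minors of $\A_{S^c}$, while $\bb_S\in{\rm span}(\A_S)$ is the vanishing of all $(d+1)\times(d+1)$ minors of the augmented matrix $[\A_S\,|\,\bb_S]$. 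Thus each $B_S$ is a real algebraic set and $\mathcal{N}$ is a finite union of such sets.

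The core step is to show that every $B_S$ is a null set, and this is exactly where the hypothesis $m\geq 2d$ enters, through a split on the size of $S^c$. If $\#S^c\geq d$, then a generic matrix $\A_{S^c}\in\R^{\#S^c\times d}$ has rank $d$, so $\{\rank(\A_{S^c})<d\}$ is a proper determinantal subvariety and therefore has measure zero in the $\A_{S^c}$-coordinates; by Fubini, $B_S$ is null regardless of the constraint on $\bb_S$. If instead $\#S^c\leq d-1$, then $\#S=m-\#S^c\geq m-d+1\geq d+1$, using $m\geq 2d$. For a generic $\A_S$ one has $\rank(\A_S)=d<\#S$, so ${\rm span}(\A_S)$ is a proper $d$-dimensional subspace of $\R^{\#S}$; the set of $\bb_S$ lying in a fixed $d$-dimensional subspace of $\R^{\#S}$ has measure zero because $\#S>d$, and Fubini again forces $B_S$ to be null. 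The two regimes cover $\#S^c\in\{d,\ldots,m\}$ and $\#S^c\in\{0,\ldots,d-1\}$ respectively, so they exhaust all $S$; the validity of the second regime, namely $\#S\ge d+1$, is precisely what requires $m\geq 2d$, and it fails for $m<2d$, matching the lower bound of Theorem~\ref{realminimal.thm1}.

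To finish, I would assemble the pieces. Since each $B_S$ is null and equal to the common zero set of finitely many polynomials, not all of those polynomials can vanish identically (else $B_S$ would be all of $\R^{m\times(d+1)}$), so $B_S\subseteq V(g_S)$ for some nonzero polynomial $g_S$. Taking $f=\prod_S g_S$, which is a nonzero polynomial since $\R[\cdot]$ is an integral domain, we get $\mathcal{N}\subseteq V(f)$. Then $X:=\R^{m\times(d+1)}\setminus V(f)$ is a dense open set whose complement $V(f)$ is a null set, and by construction every $(\A,\bb)\in X$ is affine phase retrievable. This is exactly the assertion that a generic $(\A,\bb)$ is affine phase retrievable, with $X^c$ even a real algebraic variety.

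The main obstacle is not any single estimate but the bookkeeping of the case split: one must verify that the two measure-zero regimes genuinely exhaust all subsets $S$, and the inequality $m\geq 2d$ is both the hinge that closes the gap between them and the point at which the argument becomes sharp.
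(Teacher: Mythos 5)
Your proof is correct and follows essentially the same route as the paper: both arguments run through Theorem~\ref{le:real}(D) and the same counting dichotomy, namely that for $\#S^c\ge d$ a generic $\A_{S^c}$ has rank $d$, while for $\#S\ge d+1$ (exactly where $m\ge 2d$ is used) a generic $\bb_S$ escapes the at most $d$-dimensional subspace ${\rm span}(\A_S)$; the paper simply asserts these two generic facts, whereas you make them explicit via determinantal varieties and Fubini. One technical slip should be repaired: $\bb_S\in{\rm span}(\A_S)$ is \emph{not} equivalent to the vanishing of all $(d+1)\times(d+1)$ minors of $[\A_S\,|\,\bb_S]$ --- the minors only encode $\rank([\A_S\,|\,\bb_S])\le d$, which is strictly weaker when $\rank(\A_S)<d$ --- so your $B_S$ need not be a real algebraic set (the span condition is constructible, not Zariski closed). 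This does not damage the argument, because the implication you actually need goes in the valid direction: $\bb_S\in{\rm span}(\A_S)$ forces $\rank([\A_S\,|\,\bb_S])=\rank(\A_S)\le d$, hence $B_S$ is \emph{contained} in the corresponding determinantal variety, and in each of your two regimes that variety lies inside $V(g_S)$ for a minor $g_S$ that is a nonzero polynomial (a $d\times d$ minor of $\A_{S^c}$ when $\#S^c\ge d$, a $(d+1)\times(d+1)$ minor of $[\A_S\,|\,\bb_S]$ when $\#S\ge d+1$). So replacing ``equal to the common zero set'' by ``contained in the common zero set'' repairs the assembly step, and the whole proof goes through unchanged.
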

\begin{proof}
The theorem follows readily from Theorem \ref{le:real} (D). Note that for a generic $\A \in \R^{m\times d}$, any $d$ rows are linearly independent, so that ${\rm span} (\A_{S^c}^\top) = \R^d$ as long as $\# S^c \geq d$. On the other hand, ${\rm span}({\bf A}_S)$ is a $d$ dimensional subspace in $\R^{\# S}$ and so $\bb_S \not \in {\rm span}({\bf A}_S)$ if $\# S >d$. Thus if ${\bf b}_S\in {\rm span}({\bf A}_S)$, then $\# S \leq d$, which implies $\# S^c \geq d$. Consequently ${\rm span}\{{\bf a}_j:~j\in S^c\}={\rm span} (\A_{S^c}^\top)=\R^d$. Hence $({\bf A}, {\bf b})$ is affine phase retrievable.
\end{proof}

The following theorem highlights a difference between the classical linear phase retrieval and the affine phase retrieval.

\begin{theorem}\label{th:nonopenReal}
    Let $ m\geq 2d$. Then the set of affine phase retrievable $(\A,\bb) \in \R^{m\times (d+1)}$  is not an open set.
\end{theorem}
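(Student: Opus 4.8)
The plan is to exhibit a single affine phase retrievable pair $(\A_0,\bb_0)$ that lies in the closure of the set of non-retrievable pairs. Since openness of the retrievable set would force every retrievable pair to possess a retrievable neighborhood, producing one such ``boundary'' retrievable pair already contradicts openness. Throughout I would use the characterization in Theorem \ref{le:real}(D): a pair $(\A,\bb)$ fails to be affine phase retrievable precisely when there is an index set $S$ with $\bb_S\in{\rm span}(\A_S)$ and ${\rm span}\{\ba_j:j\in S^c\}\neq\R^d$.

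The guiding observation is that the condition ${\rm span}\{\ba_j:j\in S^c\}\neq\R^d$ is closed in $\A$, so it can be preserved along an entire sequence of non-retrievable pairs simply by keeping the rows indexed by $S^c$ fixed. For the limiting pair to remain retrievable we must then have $\bb_S\notin{\rm span}(\A_S)$, even though $\bb_S\in{\rm span}(\A_S)$ for every term of the sequence. This forces the column space of $\A_S$ to jump up in dimension in the limit, i.e. $\A_S$ must be rank deficient at the base point while the perturbations restore its full rank. The task is therefore to build a retrievable base point that carries such a hidden rank drop in some block $\A_S$. The main obstacle, and the reason this is not completely trivial, is that lowering the rank of $\A_S$ tends to force a violation of Theorem \ref{le:real}(D) through the complementary block $\A_{S^c}$, so the two blocks must be carefully balanced against each other.

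Concretely, for $d\ge 2$ I would take $m=2d$ first and let the rows of $\A_0$ be the standard basis vectors repeated in pairs, $\ba_{2i-1}=\ba_{2i}=\ve_i$ for $i=1,\dots,d$, together with shifts chosen so that $b_{2i-1}\neq b_{2i}$ for each $i$. Each coordinate $x_i$ is then recovered from the two phaseless samples $|x_i+b_{2i-1}|$ and $|x_i+b_{2i}|$ with distinct shifts, so a short computation shows $\MM^2_{\A_0,\bb_0}$ is injective and the base point is affine phase retrievable. When $m>2d$ I append the remaining $m-2d$ rows as arbitrary vectors lying in ${\rm span}\{\ve_2,\dots,\ve_d\}$; appending measurements cannot destroy injectivity of $\MM_{\A_0,\bb_0}$, so the enlarged base point is still retrievable.

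Finally I would perturb only the single row $\ba_2$, replacing $\ve_1$ by $\ve_1+\ep\,\ve_2$, and take $S=\{1,2\}$. By construction the fixed complementary rows span only ${\rm span}\{\ve_2,\dots,\ve_d\}$, so ${\rm span}\{\ba_j:j\in S^c\}\neq\R^d$. At the base point the column space of the $2\times d$ block $\A_{\{1,2\}}$ is the line ${\rm span}\{(1,1)^\top\}$, which misses $\bb_{\{1,2\}}=(b_1,b_2)^\top$ because $b_1\neq b_2$; but after the perturbation $\A_{\{1,2\}}$ has rank $2$, its column space is all of $\R^2$, and hence contains $\bb_{\{1,2\}}$. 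Thus Theorem \ref{le:real}(D) is violated for the perturbed pair, which is therefore not affine phase retrievable, while the perturbed pair converges to $(\A_0,\bb_0)$ as $\ep\to0$. This realizes $(\A_0,\bb_0)$ as a retrievable pair arbitrarily close to non-retrievable ones, so the retrievable set cannot be open. The one point deserving careful verification is the retrievability of the base point after the rank of $\A_{\{1,2\}}$ has been deliberately dropped; the pairing of measurements by coordinate, with $b_{2i-1}\neq b_{2i}$, is exactly what keeps the complementary requirements of Theorem \ref{le:real}(D) satisfied and resolves the obstacle noted above.
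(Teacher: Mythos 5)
Your proposal is correct and follows essentially the same route as the paper: the identical base example (identity measurements paired per coordinate with distinct shifts $b_{2i-1}\neq b_{2i}$, zero-padded rows when $m>2d$) and the identical rank-one perturbation of a single row in the direction $\ve_2$, so that the rows outside $S=\{1,2\}$ fail to span $\R^d$. The only difference is in certifying non-retrievability of the perturbed pair: you invoke Theorem \ref{le:real}(D) directly (the perturbed block $\A'_S$ attains rank $2$, so $\bb_S\in{\rm span}(\A'_S)$ automatically), whereas the paper exhibits explicit vectors $\vx,\vy$ with equal measurements, which is why it needs the special normalization $b_{12}=0$ that your argument dispenses with.
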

\begin{proof} We only need to find an affine phase retrievable $(\A,\bb) \in \R^{m\times (d+1)}$ such that for each $\epsilon>0$, there is a small perturbation $(\A',\bb) \in \R^{m\times (d+1)}$ with $\|\A-\A'\|_F <\epsilon$ such that $(\A',\bb)$ is not affine phase retrievable, where $\|\cdot\|_F$ denotes the $l^2$-norm (Frobenius norm). We first do so for $m=2d$. Set
$$
     \A=(I_d, I_d)^\top, \mhsp \bb=(b_{11}, \ldots, b_{d1}, b_{12}, \ldots, b_{d2})^\top.
$$
Here we require that $b_{j1} \neq b_{j2}$ for all $j$ and specially suppose $ b_{12}=0 $. Then $(\A,\bb)$ is affine phase retrievable. To see this, assume that $\vx,\vy \in\R^d$ such that $\MM_{\A,\bb}(\vx)= \MM_{\A,\bb}(\vy)$. Then for each $j$, we must have $|x_j+b_{jk}|=|y_j+b_{jk}|$ for $k=1,2$. Since $b_{j1} \neq b_{j2}$, we must have $x_j=y_j$. Thus $\MM_{\A,\bb}$ is injective and hence $(\A,\bb)$ is phase retrievable.

Now let $\delta>0$ be sufficiently small. We perturb $\A$ to
\begin{equation} \label{eq:perturbReal}
     \A' = \left(I_d+b_{11}\delta E_{21},  I_d\right)^\top,
\end{equation}
where $E_{ij}$ denotes the matrix with the $(i,j)$-th entry being 1 and all other entries being 0. Now set $\vx=(b_{11}, -1/\delta, 0,\ldots,0)^\top $ and $\vy = (-b_{11}, -1/\delta, 0, \ldots, 0)^\top$. It is easy to see that
$$
|\A'\vx+\bb|=|\A'\vy+\bb|.
$$
Hence $(\A',\bb)$ is not affine phase retrievable. By taking $\delta$ sufficiently small, we will have $\|\A'-\A\|_F\leq  \epsilon$. It follows that for $m=2d$, the set of affine phase retrievable $(\A,\bb) \in \R^{m\times (d+1)}$  is not an open set.

In general for $m> 2d$, we can simply take the above construction $(\A,\bb)\in \R^{2d\times (d+1)}$ and augment it to a matrix $(\tilde \A, \tilde\bb)\in \R^{m\times (d+1)}$ by appending $m-2d$ rows of zero vectors to form its last $m-2d$ rows. The $(\tilde\A, \tilde\bb)$ is clearly affine phase retrievable, and the same perturbation above applied to the first $2d$ rows of $\A$ now breaks the affine phase retrievability. Thus for any $m\geq 2d$, the set of affine phase retrievable $(\A,\bb) \in \R^{m\times (d+1)}$  is not an open set.
\end{proof}

\subsection{Real sparse affine phase retrieval}
Set
\[
\Sigma_s(\H^d)\,\,:=\,\, \{\vx\in \H^d: \|\vx\|_0\leq s\}.
\]
We say that $(\A,\bb)\in \H^{m\times (d+1)}$ is {\em $s$-sparse affine phase retrievable for $\H^d$} if $\MM_{\A,\bb}$ is injective on $\Sigma_s(\H^d)$. In this subsection, we show that the minimal number of affine measurements to recover all $s$-sparse real signals is $2s+1$.

\begin{theorem}
\begin{itemize}
\item[\rm (i)]~ Let $1 \leq s \leq d-1$ and $(\A,\bb)\in \R^{m\times (d+1)}$ be $s$-sparse affine phase retrievable for $\R^d$. Then $ m\geq 2s+1$.
\item[\rm (ii)]~ Let $m \geq 2s+1$ and $(\A,\bb)$ be a generic element in $\R^{m\times (d+1)}$. Then  $({\bf A}, {\bf b})$ is $s$-sparse affine phase retrievable for $\R^d$.
\end{itemize}
\end{theorem}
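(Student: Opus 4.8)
Part (i). The plan is to prove the contrapositive: assuming $m\le 2s$, I will produce two distinct vectors of $\Sigma_s(\R^d)$ with equal measurements. First I would reduce to the extreme sparsity level. Since $s\le d-1$, fix a set $T$ of $s+1$ coordinates; any confused pair of vectors supported on $T$ having at least one zero entry (hence $s$-sparse) lifts, by extension by zero, to a confused pair in $\Sigma_s(\R^d)$. So it suffices to treat $(\A,\bb)\in\R^{m\times(s+1)}\times\R^m$ with $m\le 2s$ and show that $\MM_{\A,\bb}$ is not injective on the set $\Sigma$ of vectors in $\R^{s+1}$ with a zero coordinate. Throughout I use the elementary identity $|\innerp{\ba_j,\vx}+b_j|^2-|\innerp{\ba_j,\vy}+b_j|^2=\innerp{\ba_j,\vx-\vy}\bigl(\innerp{\ba_j,\vx+\vy}+2b_j\bigr)$, so a confused pair is the same as $\vx\neq\vy$ with $\innerp{\ba_j,\vx-\vy}\bigl(\innerp{\ba_j,\vx+\vy}+2b_j\bigr)=0$ for every $j$. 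The easy regimes are when $\bb$ lies in the column space ${\rm span}(\A)$, or when $\rank\A\le s$. If $\bb\in{\rm span}(\A)$, pick $\vv$ with $\innerp{\ba_j,\vv}+b_j=0$ for all $j$; then $\MM^2_{\A,\bb}(\vv+\vu)=\MM^2_{\A,\bb}(\vv-\vu)$ for \emph{every} $\vu$, and since $s+1\ge 2$ I can choose $\vu\neq 0$ so that $\vv+\vu$ and $\vv-\vu$ each acquire a prescribed zero entry. If instead $\rank\A\le s$, then $\ker\A\neq\{0\}$; taking $0\neq\vu\in\ker\A$ gives $\MM^2_{\A,\bb}(\vx)=\MM^2_{\A,\bb}(\vx+\vu)$ for all $\vx$, and again I can choose $\vx$ so that $\vx$ and $\vx+\vu$ both have a zero entry. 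Note that the same-support reduction alone only yields $m\ge 2s$; the gain of one measurement comes precisely from having the extra coordinate to place such zeros.

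The remaining and chief case is $\rank\A=s+1$ with $\bb\notin{\rm span}(\A)$, which forces $s+2\le m\le 2s$ (so $s\ge 2$); this is the step I expect to be the main obstacle. Here I would choose $s+1$ rows $S$ of $\A$ forming an invertible $(s+1)\times(s+1)$ block with $\bb_S\neq 0$, and solve $\innerp{\ba_j,\vw}=-2b_j$ $(j\in S)$ uniquely for $\vw:=\vx+\vy$, noting $\vw\neq 0$. Putting the remaining indices in $S^c$, the vector $\vz:=\vx-\vy$ must satisfy $\innerp{\ba_j,\vz}=0$ for $j\in S^c$, i.e. $\vz\in\ker\A_{S^c}$, and since $|S^c|=m-(s+1)\le s-1$ this kernel has dimension at least $2(s+1)-m\ge 2$. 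A short linear-algebra argument then finishes: because $\ker\A_{S^c}$ is at least two-dimensional it cannot have all its coordinate functionals proportional, so I can select coordinates $a\neq b$ and a vector $0\neq\vz\in\ker\A_{S^c}$ with $z_a=-w_a$ and $z_b=w_b$ (keeping $\vz\neq 0$ by using a coordinate where $\vw$ or $\vz$ is forced nonzero). Setting $\vx=(\vw+\vz)/2$ and $\vy=(\vw-\vz)/2$ yields distinct vectors that are confused by construction and have zero entries at $a$ and $b$ respectively, so both lie in $\Sigma$. The delicate point is exactly guaranteeing the two free dimensions (via the invertible block, where $m\le 2s$ is used sharply) together with $\vz\neq 0$ and the two placed zeros.

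Part (ii). For generic measurements the plan is a dimension-counting argument showing that the set of $(\A,\bb)$ which \emph{fail} to be $s$-sparse affine phase retrievable is contained in a finite union of proper real-algebraic subvarieties of $\R^{m\times(d+1)}$, hence is a null set. A pair $\vx\neq\vy$ in $\Sigma_s(\R^d)$ is supported on index sets $T_1,T_2$ with $|T_1|,|T_2|\le s$, and $\MM_{\A,\bb}(\vx)=\MM_{\A,\bb}(\vy)$ means that for some sign pattern $\epsilon\in\{\pm1\}^m$ one has $\innerp{\ba_j,\vx}-\epsilon_j\innerp{\ba_j,\vy}=(\epsilon_j-1)b_j$ for all $j$. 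For fixed $(T_1,T_2,\epsilon)$ this is a linear system in the at most $2s$ unknowns $(\vx_{T_1},\vy_{T_2})$ with $m\ge 2s+1$ equations, hence over-determined. For $\epsilon=(1,\dots,1)$ a nontrivial solution needs a nonzero element of the kernel of $\A$ restricted to the columns in $T_1\cup T_2$, which vanishes for generic $\A$ since $m\ge 2s\ge|T_1\cup T_2|$ forces full column rank. For every other $\epsilon$, I would show that solvability forces $(\A,\bb)$ into the proper subvariety cut out by vanishing of the $(2s+1)\times(2s+1)$ minors of the augmented coefficient matrix; this variety is proper because, e.g., a random (or the explicitly constructed) $(\A,\bb)$ lies off it. Taking the finite union over all $(T_1,T_2,\epsilon)$ and discarding the further measure-zero degeneracies (such as $\innerp{\ba_j,\vy}+b_j=0$, which makes $\epsilon_j$ ambiguous) produces a null exceptional set whose complement is dense and open; every $(\A,\bb)$ in this complement is $s$-sparse affine phase retrievable.
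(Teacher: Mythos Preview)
For Part (i), your approach is essentially correct and broadly parallel to the paper's, though organized differently. You restrict to a fixed set of $s+1$ coordinates first and then case-split on the rank of the restricted matrix; the paper instead first bounds $\rank(\A)$ from below and then locates an invertible $(s+1)\times(s+1)$ block. Your main-case argument --- using that $\ker\A_{S^c}$ is at least two-dimensional to place zeros at two different coordinates --- is sound: since $\dim V\ge 2$, the coordinate functionals restricted to $V$ cannot all be proportional, so there exist $a\ne b$ with $e_a^*|_V,\,e_b^*|_V$ linearly independent and one can hit any prescribed $(z_a,z_b)=(-w_a,w_b)$; if this forces $\vz=0$ (i.e.\ $w_a=w_b=0$), then $\vw$ already has a zero entry and the easier argument applies. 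Your sentence is elliptical on exactly this point but the logic holds. One small slip: requiring $\bb_S\neq 0$ is neither necessary nor always achievable (for instance, if every index $j$ with $b_j\neq 0$ has $\ba_j=0$); simply drop it, since the construction works regardless. By contrast, the paper handles the obstruction (the case $t=0$ for every coordinate pair) via a separate rank-drop argument that invokes the non-sparse criterion Theorem~\ref{le:real}(D); your direct two-dimensional kernel argument is arguably cleaner and avoids that detour.

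For Part (ii), your approach differs from the paper's and has a genuine gap. The paper does a dimension count on the incidence variety $\mathcal{A}_{I,J}=\{(\A,\bb,\vx,\vy):\supp(\vx)\subset I,\ \supp(\vy)\subset J,\ \MM^2_{\A,\bb}(\vx)=\MM^2_{\A,\bb}(\vy),\ \vx\ne\vy\}$: for each fixed $(\vx,\vy)$ the fiber is a product over $j$ of codimension-one varieties in the $j$-th row of $(\A,\bb)$, so $\dim\mathcal{A}_{I,J}\le md+2s<m(d+1)$ once $m\ge 2s+1$, and the projection to $(\A,\bb)$-space cannot be onto. Your sign-pattern decomposition replaces this by $2^m$ over-determined linear systems, but the crucial step --- that the determinantal locus for each $\epsilon\neq(1,\dots,1)$ is \emph{proper} --- is asserted rather than proved. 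Saying ``a random $(\A,\bb)$ lies off it'' is circular (that is the conclusion you are after), and no explicit $s$-sparse affine phase retrievable $(\A,\bb)$ with $m=2s+1$ has been constructed to serve as a witness. Moreover, the coefficient matrix $M$ need not have full column rank $|T_1|+|T_2|$ for every $\epsilon$ (e.g.\ when $T_1=T_2$ and $\epsilon=(-1,\dots,-1)$ the two column blocks coincide), so the minor-vanishing criterion must be formulated and checked more carefully than ``$(2s+1)\times(2s+1)$ minors''. The sign-pattern route can be completed, but it requires either exhibiting a concrete witness or, in effect, redoing a dimension count; the paper's incidence-variety argument gets there in one stroke.
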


\begin{proof}  (i) We first show that if $(\A,\bb)$ is $s$-sparse affine phase retrievable， then $m \geq 2s+1$. First we claim that the rank of ${\bf A}$ is at least $r=\min(d,2s)$. Indeed, suppose that the claim is false. Then there exists a nonzero vector ${\bf x}\in \Sigma_{r}(\R^d)$, such that ${\bf A} {\bf  x}={\bf 0}$. Write ${\bf x}={\bf u}-{\bf v}$ with ${\bf u},{\bf v}\in \Sigma_s(\R^d)$. Then ${\bf u}\ne {\bf v}$ and  ${\bf A}{\bf u}={\bf A}{\bf v}$. Hence for all $1 \leq j \leq m$, we have
$$
    |\langle {\bf a}_j, {\bf u}\rangle +b_j|=|\langle {\bf a}_j, {\bf v}\rangle+b_j|,
$$
which is a contradiction. Thus $\rank(\A) \geq r=\min(d,2s)$.

Assume that $ m \leq 2s$. We derive a contradiction. Since $s<d$, it follows that $r \geq s+1$. Thus there exists an index set $T\subset \{1, 2, \ldots, m\}$ with $\# T = s+1$, such that $\rank(\A_{T})=s+1$. Without of loss of generality we may assume that $T=\{1, 2, \ldots, s+1\}$. Moreover, we may also without of loss of generality assume that the first $s+1$ columns of $\A_{T}$ are linearly independent. In other words, the $(s+1)\times (s+1)$ submatrix of $\A$ restricted to the first $s+1$ rows and columns is nonsingular. Call this matrix $B$. It follows that there exists a $\vy\in\R^{s+1}$ such that $B\vy = -\bb_{T}$.  Write $\vy=(y_1, \ldots, y_{s+1})^\top$ and set
$$
    \vv_0 = (y_1, \ldots, y_{s+1}, 0, \ldots, 0)^\top\in\R^{d}.
$$
Then $\A_T\vv_0 =-\bb_T$.

If $y_j=0$ for some $1 \leq j \leq s+1$, say $y_{s+1}=0$, we let $\vu=(u_1, \ldots, u_s,0, \ldots, 0)^\top$. Since $\# T^c = m-(s+1) \leq s-1$, there exists such a $\vu_0 \neq 0$ such that $\innerp{\ba_j, \vu_0}=0$ for all $j\in T^c$. Now for $\vx = \vv_0+\vu_0$ and $\vy = \vv_0-\vu_0$, we have $\MM^2_{\A,\bb}(\vx) = \MM^2_{\A,\bb}(\vy)$ and $\vx\neq \vy$. Furthermore, $\vx,\vy \in\Sigma_s(\R^d)$. This is a contradiction. Hence $y_j \neq 0$ for all $1 \leq j \leq s$.

Now for any $1 \leq j_1<j_2 \leq s+1$ consider
\begin{equation}  \label{eq:u-constrain}
      \vu_{j_1,j_2} =(  u_1, \ldots, u_{s+1}, 0, \ldots, 0)^\top\in\R^{d}, \mhsp  u_{j_1} = ty_{j_1},~u_{j_2} = -ty_{j_2}.
\end{equation}
We view the other $u_j$'s and $t$ as unconstrained variables, so there are $s$ variables. Since $\# T^c = m-(s+1) \leq s-1$, it follows that there exists a $\tilde\vu_{j_1,j_2} \neq 0$ satisfying (\ref{eq:u-constrain}) such that $\innerp{\ba_j, \tilde\vu_{j_1,j_2}}=0$ for all $j\in T^c$. If $t\neq 0$, then we may normalize $\tilde\vu_{j_1,j_2}$ so that $t=1$. Set $\vx = \vv_0+\tilde\vu_{j_1,j_2}$ and $\vy = \vv_0-\tilde\vu_{j_1,j_2}$. It follows that $\MM^2_{\A,\bb}(\vx) = \MM^2_{\A,\bb}(\vy)$ and
$$
     \supp(\vx) \subset \{1, 2,\ldots, s+1\}\setminus\{j_2\}, \shsp \supp(\vy) \subset \{1, 2,\ldots, s+1\}\setminus\{j_1\}.
 $$
This is a contradiction.

To complete the proof of $m \geq 2s+1$, we finally need to consider the case that $t=0$ in $\tilde\vu_{j_1,j_2} \neq 0$ for any two indices $1 \leq j_1<j_2 \leq s+1$. But if so, it implies that any $s-1$ columns among the first $s+1$ columns of $\A_{T^c}$ are linearly dependent. In particular, it means the $(m-s-1)\times s$ submatrix of $\A_{T^c}$ restricted to the first $s+1$ columns has rank at most $s-2$. Now because the $(s+1)\times (s+1)$ submatrix of $\A$ restricted to the first $s+1$ rows and columns is nonsingular, we may without loss of generality assume that $s\times s$ submatrix of $\A$ restricted to the first $s$ rows and columns is nonsingular, for otherwise we can make a simple permutation of the indices. The key now is to observe that $(\A,\bb)$ is not $s$-sparse affine phase retrievable, because $(\A,\bb)$ restricted to the first $s$ columns is not affine phase retrievable for $\R^s$. To see this, let $\A'$ be the submatrix of $\A$ consisting of only the first $s$ columns of $\A$. We show $(\A',\bb)$ is not affine phase retrievable for $\R^s$. Note that for $S=\{1,2,\ldots,s\}$, we have $\bb_S\in {\rm span}\,(\A'_S)$ because by assumption $\A'_S$ is nonsingular. But we also know that the rows of $\A_{S^c}$ do not span $\R^s$ because it has $\rank(\A_{S^c}) \leq s-1$. Hence $(\A',\bb)$ is not affine phase retrievable by Theorem \ref{le:real} (D). This completes the proof of $m \geq 2s+1$.

\medskip

(ii)  ~Next we prove for  $m \geq 2s+1$, a generic $(\A,\bb) \in\R^{m \times (d+1)}$ is $s$-sparse affine phase retrievable. The set of all $(\A,\bb)\in \R^{m\times (d+1)}$ has real dimension $m(d+1)$. The goal is to show that the the set of $(\A,\bb)$ that are not $s$-sparse affine phase retrievable  lies in a finite union of subsets of dimension strictly less than $m(d+1)$. Our result then follows.

   For any subset of indices $I,J\subset \{1,\ldots,m\}$ with $\#I ,\#J\leq s$, we say $(\A,\bb) \in \R^{m\times (d+1)}$ is not $(I,J)$-sparse affine phase retrievable  if there exist $\vx\neq \vy$ in $\R^d$ such that
\begin{equation}  \label{eq:IJsparseReal}
    {\rm supp}(\vx)\subset I, \shsp {\rm supp}(\vy)\subset J, \shsp \mbox{and}\shsp \MM^2_{\A,\bb}(\vx) = \MM^2_{\A,\bb}(\vy).
\end{equation}
Let $ \mathcal{A}_{I,J} $ denote the set of all 4-tuples $(\A, \bb, \vx,\vy)$ satisfying (\ref{eq:IJsparseReal}) and $\vx\neq \vy$. Then
$$
      \mathcal{A}_{I,J} \subset \R^{m\times (d+1)}\times \R^{\# I}\times \R^{\# J}.
$$
Then $ \mathcal{A}_{I,J} $ is a well-defined real quasi-projective variety (\!\!\!\cite[Page 18]{alge}). Write $\A=(\ba_1, \ba_2, \ldots, \ba_m)^\top$ and $\bb=(b_1, \ldots, b_m)^\top$. Then by (\ref{eq:deng}), $\MM^2_{\A,\bb}(\vx) = \MM^2_{\A,\bb}(\vy)$ is equivalent to
\begin{equation} \label{eq:IJ-deng}
  \innerp{\ba_j, \vx-\vy}(\innerp{\ba_j,{\bf x}+{\bf y}}+2b_j) = 0, \mhsp j=1,2, \ldots, m.
\end{equation}
Fix any $j$, the above equation holds if and only if
$$
   \innerp{\ba_j, \vx-\vy}=0 \mhsp \mbox{or} \mhsp \innerp{\ba_j,\vx+\vy}+2b_j = 0.
$$
Thus for any $\vx\neq \vy$, the first condition requires $\ba_j$ to lie on a hyperplane, which has co-dimension 1, while the second condition fixes $b_j$ to be $-\innerp{\ba_j,{\bf x}+{\bf y}}/2$. Overall, for any given $\vx \neq \vy$, these two conditions constraint the $j$-th row of $(\A,\bb)$ to lie on a real projective variety of codimension 1. We shall use $X_j(\vx,\vy)$ to denote this variety. Now let $\pi_2: {\mathcal A}_{I,J} \longrightarrow \R^d \times \R^d$ be the projection $(\A,\bb,\vx,\vy) \mapsto (\vx,\vy)$  onto the last two coordinates. Then for any $\vx_0 \neq \vy_0$ in $\R^d$, we have
$$
      \pi_2^{-1}\{(\vx_0,\vy_0)\} = X_1(\vx_0,\vy_0) \times X_2(\vx_0,\vy_0) \times \ldots \times X_m(\vx_0,\vy_0)\times \{\vx_0\}\times \{\vy_0\}.
$$
Hence the dimension of $\pi_2^{-1}\{(\vx_0,\vy_0)\}$ is
$$
      \dim\left(\pi_2^{-1}\{(\vx_0,\vy_0)\} \right) = m(d+1) - m = md.
$$
It follows that $\dim({\mathcal A}_{I,J}) \leq md +\# I+\# J \leq md +2s$.

We now let $\pi_1:  {\mathcal A}_{I,J} \longrightarrow \R^{m\times (d+1)}$ be the projection $(\A,\bb,\vx,\vy) \mapsto (\A,\bb)$. Since projections cannot increase the dimension of a variety, we know that
$$
      \dim\left(\pi_1( {\mathcal A}_{I,J})\right) \leq md +2s = m(d+1) +2s-m<m(d+1).
$$
However, $\pi_1( {\mathcal A}_{I,J} )$ contains precisely those $(\A,\bb)$ in $\R^{m\times (d+1)}$ that are not $(I,J)$-sparse affine phase retrievable. Thus a generic $(\A,\bb)\in\R^{m\times (d+1)}$ is $(I,J)$-sparse affine phase retrievable.

Finally, there are only finitely many indices subsets $I,J$. Hence a generic $(\A,\bb)\in\R^{m\times (d+1)}$ ($m\geq 2s+1$) is $(I,J)$-sparse affine phase retrievable  for any $I, J$ with $\# I, \# J \leq s$. The theorem is proved.
\end{proof}

\section{Affine Phase Retrieval for Complex Signals}
\setcounter{equation}{0}
In this section, we consider affine phase retrieval for complex signals.
Affine phase retrieval for complex signals, like in the case of the classical phase retrieval, poses additional challenges.

\subsection{Complex affine phase retrieval}
We first establish the analogous of Theorem \ref{le:real} for complex signals.

\begin{theorem}\label{th:PRScomplex}
   Let  ${\bf A}=(\ba_1,\ldots,\ba_m)^\top\in \C^{m\times d}$ and ${\bb}=(b_1, \ldots, b_m)^\top\in \C^m$. Then the followings are equivalent:
   \begin{itemize}
   \item[\rm (A)] $(\A,\bb)$ is affine phase retrievable  for $\C^d$.
   \item[\rm (B)] The map $\MM^2_{{\bf A},{\bf b}}$ is injective on $\C^d$.
   \item[\rm (C)] For any $\vu,\vv\in\C^d$ and $\vu\neq 0$, there exists a $1\leq k \leq m$ such that
      $$
          \Re\Bigl( \innerp{\vu, \ba_k}\bigl(\innerp{\ba_k,\vv} +b_k\bigr)\Bigr) \neq 0.
      $$
   \item[\rm (D)] Viewing $\MM^2_{{\bf A},{\bf b}}$ as a map $\R^{2d} \longrightarrow \R^m$, its (real) Jacobian $J(\vx)$ has rank $2d$ for all $\vx\in\R^{2d}$.
   \end{itemize}
\end{theorem}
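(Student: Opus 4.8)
The plan is to follow the template of the proof of Theorem~\ref{le:real}, the only genuinely new ingredient being the real part $\Re(\cdot)$ produced by the complex modulus. The equivalence (A)$\Leftrightarrow$(B) is immediate: for nonnegative reals $|z|=|w|$ exactly when $|z|^2=|w|^2$, so $\MM_{\A,\bb}$ and $\MM^2_{\A,\bb}$ have the same injectivity on $\C^d$.

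For (B)$\Leftrightarrow$(C), I would begin from the complex analogue of the factorization \eqref{eq:deng}. Using the elementary identity $|\alpha|^2-|\beta|^2=\Re\bigl((\alpha-\beta)\overline{(\alpha+\beta)}\bigr)$ with $\alpha=\innerp{\ba_j,\vx}+b_j$ and $\beta=\innerp{\ba_j,\vy}+b_j$, one obtains
\[
|\innerp{\ba_j,\vx}+b_j|^2-|\innerp{\ba_j,\vy}+b_j|^2 = \Re\Bigl(\innerp{\ba_j,\vx-\vy}\,\overline{\bigl(\innerp{\ba_j,\vx+\vy}+2b_j\bigr)}\Bigr).
\]
Exactly as in the real case I substitute $2\vu=\vx-\vy$ and $2\vv=\vx+\vy$, a bijection between pairs $(\vx,\vy)$ with $\vx\neq\vy$ and pairs $(\vu,\vv)$ with $\vu\neq 0$. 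Pulling the conjugate across $\Re$ (using $\Re(z)=\Re(\bar z)$ and $\overline{\innerp{\ba_j,\vu}}=\innerp{\vu,\ba_j}$), the right-hand side becomes $4\Re\bigl(\innerp{\vu,\ba_j}(\innerp{\ba_j,\vv}+b_j)\bigr)$. Hence $\MM^2_{\A,\bb}(\vx)=\MM^2_{\A,\bb}(\vy)$ for some $\vx\neq\vy$ is equivalent to the existence of $\vu\neq 0$ and $\vv$ with $\Re\bigl(\innerp{\vu,\ba_k}(\innerp{\ba_k,\vv}+b_k)\bigr)=0$ for every $k$, which is precisely the negation of (C). This gives (B)$\Leftrightarrow$(C).

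For (C)$\Leftrightarrow$(D), I would regard each coordinate $f_j(\vx)=|\innerp{\ba_j,\vx}+b_j|^2=g_j\overline{g_j}$, with $g_j(\vx)=\innerp{\ba_j,\vx}+b_j$, as a real-valued function on $\R^{2d}$ and compute its real differential via $d f_j = 2\Re(\overline{g_j}\,dg_j)$. The directional derivative at a point identified with $\vv\in\C^d$, in the real direction identified with $\vu\in\C^d$, is then $2\Re\bigl(\overline{g_j(\vv)}\innerp{\ba_j,\vu}\bigr)=2\Re\bigl(\innerp{\vu,\ba_j}(\innerp{\ba_j,\vv}+b_j)\bigr)$. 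Thus the real Jacobian $J(\vv)\colon\R^{2d}\to\R^m$ sends $\vu$ to the vector $2\bigl(\Re(\innerp{\vu,\ba_j}(\innerp{\ba_j,\vv}+b_j))\bigr)_{j=1}^m$. Since the domain has real dimension $2d$, $\rank J(\vv)=2d$ is equivalent to injectivity of $J(\vv)$, i.e. to the absence of a nonzero $\vu$ in its kernel; requiring this at every $\vv\in\C^d$ is exactly (C), closing the cycle.

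The main obstacle I anticipate is the Jacobian step: one must treat $\MM^2_{\A,\bb}$ as a map $\R^{2d}\to\R^m$ and carry out the Wirtinger-style differentiation of $|g_j|^2$ correctly, keeping scrupulous track of which argument of $\innerp{\cdot,\cdot}$ is conjugate-linear so that the directional-derivative expression matches the asymmetric form $\Re(\innerp{\vu,\ba_k}(\innerp{\ba_k,\vv}+b_k))$ appearing in (C). Once this identity is pinned down, the rank/injectivity bookkeeping is routine and identical in spirit to the real argument (C)$\Leftrightarrow$(E).
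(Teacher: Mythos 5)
Your proposal is correct and follows essentially the same route as the paper: the same identity $|\alpha|^2-|\beta|^2=\Re\bigl((\alpha-\beta)\overline{(\alpha+\beta)}\bigr)$ with the substitution $2\vu=\vx-\vy$, $2\vv=\vx+\vy$ for (B)$\Leftrightarrow$(C), and the identification of $\ker J(\vv)$ with vectors $\vu$ violating (C) for (C)$\Leftrightarrow$(D). The only difference is cosmetic: you compute the Jacobian's action via the first-order expansion $f_j(\vv+t\vu)=|g_j(\vv)|^2+2t\Re\bigl(\overline{g_j(\vv)}\innerp{\ba_j,\vu}\bigr)+O(t^2)$, whereas the paper expands everything into real and imaginary blocks and verifies the same kernel identity entrywise.
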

\begin{proof}   The equivalence of (A) and (B) have already been discussed earlier. We focus on the other conditions.

\noindent
(A) $\Leftrightarrow$ (C).~~Assume that $\MM^2_{{\bf A},{\bf b}}({\bf x})=\MM^2_{{\bf A},{\bf b}}({\bf y})$ for some ${\bf x} \neq {\bf  y}$ in $\C^d$. Observe that for any $a,b\in\C$, we have $|a|^2-|b|^2 = \Re((\bar a-\bar b)(a+b))$. Thus for any $j$, we have
\begin{equation} \label{eq:diff}
    \abs{\innerp{\ba_j,{\bf x}}+b_j}^2-\abs{\innerp{\ba_j,{\bf y}}+b_j}^2  =
       \Re\Bigl( \innerp{ {\bf x}-{\bf y}, \ba_j}(\innerp{\ba_j,{\bf x}+{\bf y}}+2b_j)\Bigr).
\end{equation}
Set $2\vu = \vx-\vy$ and $2\vv = \vx+\vy$. Then $\vu \neq 0$ and for all $j$,
\begin{equation} \label{eq:C_innerp_zero}
      \Re\Bigl( \innerp{ \vu, \ba_j}(\innerp{\ba_j,\vv}+b_j)\Bigr)=0.
\end{equation}
Conversely, assume that (\ref{eq:C_innerp_zero}) holds for all $j$. Let $\vx,\vy\in\C^d$ be given by $\vx-\vy=2\vu$ and $\vx+\vy =2\vv$. Then $\vx\neq \vy$. However, we would have $\MM^2_{{\bf A},{\bf b}}({\bf x})=\MM^2_{{\bf A},{\bf b}}({\bf y})$ and hence $(\A,\bb)$ cannot be affine phase retrievable.

\noindent
(C) $\Leftrightarrow$ (D).~~The $k$-th entry of $\MM^2_{{\bf A},{\bf b}}({\bf x})$ is $|\inner{\ba_k, \vx}+b_k|^2$. Since all variables here are complex, we shall separate them into the real and imaginary parts by adopting the notation $\vx = \vx_R+i\vx_I$, $\ba_k =\ba_{k,R}+i\ba_{k,I}$ and $b_k = b_{k,R}+ib_{k,I}$. The $k$-th entry of $\MM^2_{{\bf A},{\bf b}}({\bf x})$ is now
$$
       |\inner{\ba_k, \vx}+b_k|^2 = \left(\innerp{\ba_{k,R},\vx_R} + \innerp{\ba_{k,I},\vx_I} + b_{k,R}\right)^2
       				+  \left(\innerp{\ba_{k,R},\vx_I} - \innerp{\ba_{k,I},\vx_R} - b_{k,I}\right)^2.
$$
It follows that the (real) Jacobian of $\MM^2_{{\bf A},{\bf b}}(\vx_R,\vx_I)$ is
$$
   J(\vx):=J(\vx_R,\vx_I) = 2\begin{pmatrix}   \ba_{1,R}^\top\cdot \alpha_1(\vx)-\ba_{1,I}^\top\cdot \beta_1(\vx)
                                                      &\ba_{1,I}^\top\cdot \alpha_1(\vx)+\ba_{1,R}^\top\cdot \beta_1(\vx)\\
                                 \ba_{2,R}^\top\cdot \alpha_2(\vx)-\ba_{2,I}^\top\cdot \beta_2(\vx)
                                      &\ba_{2,I}^\top\cdot \alpha_2(\vx)+\ba_{2,R}^\top\cdot \beta_2(\vx)\\
                                        \vdots & \vdots \\
                              \ba_{m,R}^\top\cdot \alpha_m(\vx)-\ba_{m,I}^\top\cdot \beta_m(\vx)
                                  & \ba_{m,I}^\top\cdot \alpha_m(\vx)+\ba_{m,R}^\top\cdot \beta_m(\vx)
\end{pmatrix},
$$
where $\alpha_j(\vx):=\innerp{\ba_{j,R},\vx_R}+\innerp{\ba_{j,I},\vx_I}+b_{j,R}$ and $\beta_j(\vx):=\innerp{\ba_{j,R},\vx_I}-\innerp{\ba_{j,I},\vx_R}-b_{j,I}$ for all $0 \leq j\leq m$.

Now assume that $\rank (J(\vx))$ is not $2d$ everywhere. Then there exist $\vv=\vv_R+i\vv_I$ and $\vu=\vu_R+i\vu_I \neq 0$, such that $\vu$ as a vector in $\R^{2d}$ is in the null space of $J(\vv)$, i.e.,
$$
     J(\vv) \begin{pmatrix} \vu_R\\ \vu_I\end{pmatrix} = 0.
$$
It follows that for all $1 \leq k \leq m$, we have
\begin{equation} \label{null_u}
     C_k :=\innerp{\ba_{k,R},\vu_R} \alpha_k(\vv)-\innerp{\ba_{k,I},\vu_R} \beta_k(\vv)
               + \innerp{\ba_{k,I},\vu_I} \alpha_k(\vv)+\innerp{\ba_{k,R},\vu_I} \beta_k (\vv) =0.
\end{equation}
But one can readily check that $C_k$ is precisely
$$
     C_k=\Re\Bigl( \innerp{ \vu, \ba_k}(\innerp{\ba_k,\vv}+b_k)\Bigr).
$$
Thus $(\A,\bb)$ cannot be affine phase retrievable by (C).

The converse clearly also holds.
Assume that (C) is false. Then there exists $\vv,\vu\in\C^d$ and $\vu\neq 0$ such that
$$
    \Re\Bigl( \innerp{ \vu, \ba_k}(\innerp{\ba_k,\vv}+b_k)\Bigr) = 0
$$
for all $1 \leq k \leq m$. It follows that (\ref{null_u}) holds for all $k$ and hence
$$
     J(\vv) \begin{pmatrix} \vu_R\\ \vu_I\end{pmatrix} = 0.
$$
Thus $\rank(J(\vv)) < 2d$.
\end{proof}

\vspace{3mm}

\subsection{Minimal measurement number}

We now show that the minimal number of measurements needed to be affine phase retrievable is $ 3d$. This is  surprising compared to the classical affine phase retrieval, where the minimal number is $4d-O(\log_2d)$.

\begin{lem}\label{le:3}
    Let $z_1, z_2\in \C$. Suppose that $b_1,b_2,b_3\in \C$ are not collinear on the complex plane.
 Then $z_1=z_2$ if and only if   $\abs{z_1+b_j}=\abs{z_2+b_j},\, j=1,2,3$.
\end{lem}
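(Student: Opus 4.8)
The plan is to reduce the claim to an elementary fact of plane geometry: the set of points equidistant from two distinct points is their perpendicular bisector, which is a single line. The forward implication is trivial, so I concentrate on the converse. Assume $\abs{z_1+b_j}=\abs{z_2+b_j}$ for $j=1,2,3$, with the goal of deducing $z_1=z_2$.

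First I would square and expand each equality. Using $\abs{z_k+b_j}^2=\abs{z_k}^2+2\Re(\conj{z_k}b_j)+\abs{b_j}^2$, the identity $\abs{z_1+b_j}^2=\abs{z_2+b_j}^2$ simplifies (the $\abs{b_j}^2$ terms cancel) to
\[
    \abs{z_1}^2-\abs{z_2}^2+2\Re\bigl(\conj{(z_1-z_2)}\,b_j\bigr)=0.
\]
Setting $w:=z_1-z_2$ and $c:=\abs{z_2}^2-\abs{z_1}^2\in\R$, this reads $2\Re(\conj{w}\,b_j)=c$ for each $j=1,2,3$.

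Next I would identify $\C$ with $\R^2$ and note that $\Re(\conj{w}\,v)$ is exactly the real inner product of $w$ and $v$. The three equations then say that $b_1,b_2,b_3$ all lie in the level set $\{v\in\C:\Re(\conj{w}\,v)=c/2\}$. The decisive step is the dichotomy on $w$: if $w\neq 0$ the functional $v\mapsto\Re(\conj{w}\,v)$ is a nonzero real-linear functional, so this level set is an affine line in $\R^2$; then $b_1,b_2,b_3$ would be collinear, contradicting the hypothesis. Hence $w=0$, i.e. $z_1=z_2$.

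There is no genuine difficulty here; the argument is short. The only points deserving care are separating out the degenerate case $w=0$ before asserting the level set is a bona fide line, and observing that the constant $c$ is real so that one and the same real linear equation constrains all three shifts $b_j$. The underlying geometric content is simply that three non-collinear points cannot all lie on a single perpendicular bisector.
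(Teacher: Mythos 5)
Your proof is correct and follows essentially the same route as the paper: both expand $\abs{z_1+b_j}^2=\abs{z_2+b_j}^2$ into a single real linear equation in $b_j$ with coefficients determined by $z_1-z_2$, and conclude that if $z_1\neq z_2$ the three shifts $b_1,b_2,b_3$ would satisfy a nontrivial affine-line equation, contradicting non-collinearity. Your write-up is merely a bit more explicit than the paper's in justifying that the level set $\{v\in\C:\Re(\conj{w}\,v)=c/2\}$ is a genuine line when $w\neq 0$.
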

\begin{proof}
We use $z_{j,R}$ and $z_{j,I}$ to denote the real and imaginary part of $z_j$, and similarly for  $b_{j,R}$ and $b_{j,I}$. Assume the lemma is false, and that there exist $z_1, z_2$ with $z_1\neq z_2$ so that  $\abs{z_1+b_j}^2=\abs{z_2+b_j}^2,\, j=1,2,3$.
  Note that $\abs{z_1+b_j}^2=\abs{z_2+b_j}^2$ implies that
\begin{equation}\label{eq:ling}
     (z_{2,R}-z_{1,R})\cdot b_{j,R}+(z_{2,I}-z_{1,I})\cdot b_{j,I}=\frac{|z_1|^2-|z_2|^2}{2}, \quad j=1,2,3.
\end{equation}
The (\ref{eq:ling}) together with $z_1\ne z_2$ implies that $b_1,b_2,b_3$ are collinear. This is a contradiction.
\end{proof}

\vspace{3mm}

\begin{theorem}\label{th:m3d}
\begin{itemize}
     \item[\rm (i)]  ~Suppose that $(\A,\bb)\in \C^{m\times (d+1)}$ is affine phase retrievable in $\C^d$. Then $m\geq 3d$.
     \item[\rm (ii)]   ~Let $B:=(\ba_1,\ldots,\ba_d)\in \C^{d\times d}$ be nonsingular. Set $\A=(B,B,B)^\top\in \C^{3d\times d}$. Let
     $$
           \bb=(b_{11},\ldots,b_{d1},b_{12},\ldots,b_{d2},b_{13},\ldots,b_{d3})^\top\in \C^{3d}
     $$
      such that  $b_{j1},b_{j2},b_{j3}$  are not collinear in $\C$ for any $1\leq j\leq d$. Then $(\A,\bb)$ is affine phase retrievable in $\C^d$.
\end{itemize}
\end{theorem}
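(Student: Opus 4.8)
The plan is to handle the two parts by different means: part (ii) is an immediate application of Lemma \ref{le:3}, while part (i) follows the template of the real minimality result (Theorem \ref{realminimal.thm1}) but exploits a feature special to the complex setting, namely that each equation in condition (C) of Theorem \ref{th:PRScomplex} imposes only a \emph{single} real-linear constraint on $\vu$, even though $\vu$ ranges over the $2d$-real-dimensional space $\C^d$.

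For part (ii), I would first observe that since $B$ is nonsingular the map $\vx \mapsto (\innerp{\ba_1,\vx},\ldots,\innerp{\ba_d,\vx})$ is a bijection of $\C^d$. Writing $z_j=\innerp{\ba_j,\vx}$, the $3d$ measurements break into $d$ groups, the $j$-th of which is $\{\,\abs{z_j+b_{jl}}:l=1,2,3\,\}$. As $b_{j1},b_{j2},b_{j3}$ are not collinear, Lemma \ref{le:3} recovers $z_j$ uniquely from this group; doing so for each $j$ recovers $(z_1,\ldots,z_d)$ and hence $\vx$. Thus $\MM_{\A,\bb}$ is injective and $(\A,\bb)$ is affine phase retrievable. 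This part should be routine.

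For part (i) I would prove the contrapositive: assuming $m\le 3d-1$, I would exhibit $\vu\ne 0$ and $\vv$ in $\C^d$ with $\Re\bigl(\innerp{\vu,\ba_k}(\innerp{\ba_k,\vv}+b_k)\bigr)=0$ for every $k$, which by condition (C) of Theorem \ref{th:PRScomplex} shows $(\A,\bb)$ is not affine phase retrievable. I would split into two cases. If $\rank(\A)\le d-1$, any nonzero $\vu$ in the common kernel of the rows works with any $\vv$. If $\rank(\A)=d$, set $s=\max\{m-2d+1,0\}$; since $m\le 3d-1$ we have $s\le d$, so I can select an index set $S$ with $\#S=s$ whose rows are linearly independent. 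This independence makes the system $\innerp{\ba_k,\vv}=-b_k$ ($k\in S$) solvable, so I fix such a $\vv$ (taking $\vv=0$ when $s=0$), which annihilates every equation indexed by $S$. The complementary set satisfies $\#S^c=m-s\le 2d-1$, and each of its indices contributes exactly one real-linear functional $\vu\mapsto\Re\bigl(\innerp{\vu,\ba_k}(\innerp{\ba_k,\vv}+b_k)\bigr)$ on the $2d$-dimensional real space $\C^d$. Fewer than $2d$ such functionals have a nontrivial common kernel, which supplies the required $\vu\ne 0$.

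The crux of part (i), and the point where the factor $3$ replaces the factor $2$ of the real case, is the choice to leave the surviving equations as single real constraints rather than forcing $\innerp{\ba_k,\vu}=0$ (which would cost two real constraints per index, as in a naive transcription of the real proof). Spending the freedom of $\vv$ to kill $s=m-2d+1$ equations leaves only $2d-1$ real constraints on the $2d$ real degrees of freedom of $\vu$, and that gap of one dimension is exactly what forces non-injectivity. The main items I would check with care are that independent rows indeed make the shift system consistent and that the bound $\#S^c\le 2d-1$ holds uniformly across the $s=0$ and $s\ge 1$ regimes; granting these, the nonvanishing of the common kernel is immediate. I do not expect to need the algebraic-geometry apparatus used for the sparse results.
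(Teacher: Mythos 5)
Your proposal is correct and follows essentially the same route as the paper: part (ii) is the paper's argument verbatim (group the measurements by row of $B$ and apply Lemma \ref{le:3}), and part (i) is the paper's degrees-of-freedom argument via condition (C) of Theorem \ref{th:PRScomplex}, namely choose $\vv$ to annihilate a solvable subsystem $\innerp{\ba_k,\vv}+b_k=0$ and then find $\vu\neq 0$ in the common kernel of the remaining (fewer than $2d$) real-linear functionals. The only cosmetic difference is that the paper kills a full set of $d$ equations (using $\rank(\A)=d$ to pick $T$ with $\#T=d$, leaving $m-d<2d$ constraints), whereas you kill only $s=\max\{m-2d+1,0\}$ equations; both counts work, and the underlying mechanism is identical.
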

\begin{proof}
(i) ~~Write $\A=(\ba_1,\ldots,\ba_m)^\top\in \C^{m\times d}$. Assume that $m <3d$. Clearly $\rank(\A)=d$, for otherwise we will have $\A\vx=0$ for some $\vx\neq 0$ and hence $\MM_{\A,\bb}(\vx) = \MM_{\A,\bb}(0)$. Hence there exists a $T\subset \{1, \ldots, m\}$ with $\# T=d$ such that $\rank(A_T)=d$, which means we can find $\vv\in\C^d$ such that
$$
      \innerp{\ba_k, \vv} + b_k = 0, \mhsp k \in T.
$$
Now, because $\# T^c =m-d <2d$, and the system of homogeneous linear equations for the variable $\vu$ with $\vv$ fixed,
$$
     \Re\Bigl( \innerp{ \vu, \ba_j}(\innerp{\ba_j,\vv}+b_j)\Bigr)=0, \mhsp j\in T^c
$$
has $2d$ real variables $\vu_R, \vu_I$, it must have a nontrivial solution. The two vectors $\vu\neq 0, \vv$ combine to yield
$$
     \Re\Bigl( \innerp{ \vu, \ba_j}(\innerp{\ba_j,\vv}+b_j)\Bigr)=0
$$
for all $1 \leq j \leq m$. This contradicts with (C) in Theorem \ref{th:PRScomplex}.
\vspace{2mm}

(ii)~~To prove $(\A,\bb)$ is affine phase retrievable, we prove that $\MM_{\A,\bb}(\vx)=\MM_{\A,\bb}(\vy)$ implies $\vx=\vy$ in $\C^d$. The property $\MM_{\A,\bb}(\vx)=\MM_{\A,\bb}(\vy)$ implies that
\begin{equation}\label{eq:bjdeng}
    \abs{\innerp{\ba_j,\vx}+b_{jk}}\,\,=\,\,\abs{\innerp{\ba_j,\vy}+b_{jk}} , \mhsp j=1,\ldots,d,\shsp k=1,2,3.
\end{equation}
Thus by Lemma \ref{le:3}, for each fixed $j$ we have
\[
     \innerp{\ba_j,\vx}=\innerp{\ba_j,\vy}.
\]
This implies $\vx=\vy$ since the matrix $B=(\ba_1,\ldots,\ba_d)$ is nonsingular.
\end{proof}

\vspace{3mm}

It is well known that in the classical phase retrieval, the set of all phase retrievable $\A \in \C^{m\times d}$ is an open set in $ \C^{m\times d}$. But for affine phase retrieval, as with the real affine phase retrieval case, this property no longer holds. The following theorem shows that this property also doesn't  hold in the complex case when $m \geq 3d$.

\begin{theorem} \label{th:ins}
      Let $m \geq 3d$. Then the set of affine phase retrievable $(\A,\bb) \in \C^{m\times (d+1)}$ is not an open set in $ \C^{m\times (d+1)}$. In fact, there exists an affine phase retrievable $(\A,\bb)\in \C^{3d\times (d+1)}$, which satisfies the conditions in  Theorem \ref{th:m3d} (ii). Given any $\epsilon>0$, there exists  $ (\A',\bb) \in \C^{3d\times (d+1)} $  which does not have affine phase retrievable property such that
 \[
  \|\A'-\A\|_F\leq  \epsilon,
 \]
 where $\|\cdot \|_F$ denotes the Frobenius norm.
\end{theorem}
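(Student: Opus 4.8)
The plan is to follow the template of the real non-openness result, Theorem~\ref{th:nonopenReal}, adapted to the three-block construction of Theorem~\ref{th:m3d}(ii). First I would fix a concrete affine phase retrievable pair by taking $B=I_d$, so that $\A=(I_d,I_d,I_d)^\top$, and choosing simple non-collinear triples of shifts, e.g.\ $b_{j1}=0$, $b_{j2}=1$, $b_{j3}=\ii$ for every $1\le j\le d$. By Theorem~\ref{th:m3d}(ii) this $(\A,\bb)$ is affine phase retrievable, so it suffices to produce, for every $\epsilon>0$, a perturbation $\A'$ of $\A$ with $\|\A'-\A\|_F\le\epsilon$ (keeping $\bb$ fixed) for which $(\A',\bb)$ fails to be affine phase retrievable; this exhibits $(\A,\bb)$ as a non-interior point and hence shows the set is not open.

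The key idea is that the perturbation should couple the second coordinate into the first measurement vector of the first block, exactly as $E_{21}$ does in the real proof. Concretely I would set $\A'=(I_d+\delta E_{21},\,I_d,\,I_d)^\top$ with $0<\delta\le\epsilon$, so that $\|\A'-\A\|_F=\delta$ and only the first measurement vector changes, from $\ve_1$ to $\ve_1+\delta\ve_2$. The point of the perturbation is that for a witness vector whose second coordinate is the large value $x_2=(b_{12}-b_{11})/\delta$, the effective constant in the $(1,1)$ measurement becomes $b_{11}+\delta x_2=b_{12}$. Thus, along the first coordinate, the three non-collinear shifts $b_{11},b_{12},b_{13}$ degenerate to the two shifts $b_{12},b_{13}$, and two shifts can never be affine phase retrievable for $\C^1$ (this is Theorem~\ref{th:m3d}(i) with $d=1$, or directly Lemma~\ref{le:3}).

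Carrying this out, I would take $\vx,\vy\in\C^d$ agreeing in every coordinate except the first, with common second coordinate $x_2=y_2=(b_{12}-b_{11})/\delta$ and $x_j=y_j=0$ for $j\ge 3$. Every measurement involving the coordinates $\ge 2$ then agrees automatically, and the only surviving constraints are the two scalar magnitude equations $|x_1+b_{12}|=|y_1+b_{12}|$ (coming from both the perturbed $(1,1)$ measurement and the second block, since both effective shifts equal $b_{12}$) and $|x_1+b_{13}|=|y_1+b_{13}|$ (from the third block). Because $b_{12}\ne b_{13}$, these two conditions do not determine $x_1$ uniquely: $x_1$ and $y_1$ lie on two circles centered at $-b_{12}$ and $-b_{13}$, and reflecting $y_1$ across the line through these two centers produces a second, distinct solution $x_1\ne y_1$. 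With the explicit choices above one may simply verify that $y_1=0$, $x_1=-1-\ii$ works (both magnitudes equal $1$). Hence $\MM_{\A',\bb}(\vx)=\MM_{\A',\bb}(\vy)$ with $\vx\ne\vy$, so $(\A',\bb)$ is not affine phase retrievable, and letting $\delta\to 0$ completes the $m=3d$ case.

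For general $m>3d$ I would augment the $3d\times(d+1)$ construction by appending $m-3d$ zero rows to $\A$ (with arbitrary corresponding entries of $\bb$); these contribute only constant magnitudes $|b|$, so they neither affect the phase retrievability of the base pair nor interfere with the perturbation of the first block, and the same witness vectors continue to work. The step I expect to be the main obstacle is the conceptual one in the middle paragraph: recognizing that the perturbation is used not to force an exact linear dependence among the measurement vectors (indeed $\A'$ still has rank $d$), but to slide the effective shift of one measurement onto another shift value, collapsing three non-collinear shifts to two along a single coordinate. Once this reduction is in hand, confirming that the residual pair of magnitude constraints genuinely admits a nontrivial solution while $\delta$ remains small is exactly where the circle-reflection computation, underpinned by Lemma~\ref{le:3}, does the essential work.
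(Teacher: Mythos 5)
Your proposal is correct and follows essentially the same route as the paper: both perturb the three-block construction of Theorem \ref{th:m3d}(ii) by coupling the second coordinate into the first measurement vector (a $\delta E_{21}$-type rank-one perturbation), then take witnesses with second coordinate of size $1/\delta$ so that the effective shift of the perturbed measurement collapses onto another block's shift, leaving only two distinct shifts along the first coordinate, which admit two distinct solutions by reflection (the paper uses shifts $\ii,0,1$ and conjugate witnesses $\pm\ii$; you use shifts $0,1,\ii$ and witnesses $0$ and $-1-\ii$). The handling of $m>3d$ by appending zero rows is also identical, so there is nothing to flag.
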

\begin{proof}
Following the construction given in  Theorem \ref{th:m3d} (ii), we set $\A=(B,B,B)^\top$, where $B$ is nonsingular and
$$
           \bb:=(\underbrace{\ii,\ldots,\ii}_d,\underbrace{0,\ldots,0}_d,\underbrace{1,\ldots,1}_d)^\top\in \C^{3d}.
$$
We will show that there exists an arbitrarily small perturbation $\A'$ such that $(\A', \bb)$ is no longer affine phase retrievable. Making a simple linear transformation $\vx = B^{-1}\vy$, we see that all we need is to show that this property holds for $\A=(I_d, I_d, I_d)^\top$, where $I_d$ is the $d\times d$ identity matrix.  Let $\delta>0$ be sufficiently small. We perturb $\A$ to
\begin{equation} \label{perturb}
     \A' = (I_d+\ii\delta E_{21},  I_d, I_d)^\top,
\end{equation}
where $E_{21}$ denotes the matrix with the $(2,1)$-th entry being 1 and all other entries being 0. Now set $\vx=(\ii,-1/\delta,0,\ldots,0)^\top$ and $\vy = (-\ii, -1/\delta, 0, \ldots, 0)^\top$. It is easy to see that
$$
|\A'\vx+\bb|=|\A'\vy+\bb|.
$$
Thus $(\A',\bb)$ is not affine phase retrievable. By taking $\delta$ sufficiently small we will have $\|\A'-\A\|_F\leq  \epsilon$.

In general for $m> 3d$, like the real case,  we can simply take the above construction $(\A,\bb)\in \C^{3d\times (d+1)}$ and augment it to a matrix $(\tilde \A, \tilde\bb)\in \C^{m\times (d+1)}$ by appending $m-3d$ rows of zero vectors to form its last $m-3d$ rows. $(\tilde\A, \tilde\bb)$ is clearly affine phase retrievable, and the same perturbation above applied to the first $3d$ rows of $\tilde\A$ now breaks the affine phase retrievability.

Thus for any $m\geq 3d$, the set of affine phase retrievable $(\A,\bb) \in \C^{m\times (d+1)}$  is not an open set.
\end{proof}
\medskip

We next consider complex affine phase retrieval  for generic measurements. We have the following theorem:


\begin{theorem}\label{th:4d1}
   Suppose that $m\geq 4d-1$. Then a generic $(\A,\bb) \in \C^{m \times (d+1)} $ is affine phase retrievable in $\C^d$.
\end{theorem}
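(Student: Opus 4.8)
\emph{Proposed approach.} The plan is to show that the set of $(\A,\bb)\in\C^{m\times(d+1)}\cong\R^{2m(d+1)}$ that \emph{fail} to be affine phase retrievable lies in a proper real subvariety; since the complement of such a variety is open, dense and of full measure, this is precisely the statement that a generic $(\A,\bb)$ is affine phase retrievable. By Theorem~\ref{th:PRScomplex}~(C), $(\A,\bb)$ is \emph{not} affine phase retrievable exactly when there exist $\vu\in\C^d\setminus\{0\}$ and $\vv\in\C^d$ with
\begin{equation}\label{eq:badset}
   \Re\Bigl(\innerp{\vu,\ba_j}\bigl(\innerp{\ba_j,\vv}+b_j\bigr)\Bigr)=0,\qquad j=1,\ldots,m .
\end{equation}
I would encode this as an incidence variety
$$
   \mathcal{V}=\bigl\{(\A,\bb,\vu,\vv)\ :\ \vu\neq 0,\ \eqref{eq:badset}\text{ holds}\bigr\}\subset \C^{m\times(d+1)}\times\bigl(\C^d\setminus\{0\}\bigr)\times\C^d,
$$
a real quasi-projective variety, and bound $\dim_\R\mathcal{V}$ by fibering over the parameters $(\vu,\vv)$. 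Since the projection $\pi_1:(\A,\bb,\vu,\vv)\mapsto(\A,\bb)$ cannot increase dimension (the dimension machinery used in the sparse case, cf.~\cite[Page 18]{alge}), and $\pi_1(\mathcal{V})$ is exactly the bad set, it suffices to prove $\dim_\R\pi_1(\mathcal{V})<2m(d+1)$.

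For the fiber count, fix $(\vu,\vv)$ with $\vu\neq0$. The $j$-th equation in \eqref{eq:badset} involves only the $j$-th row $(\ba_j,b_j)\in\C^{d+1}\cong\R^{2d+2}$, so the fiber over $(\vu,\vv)$ is a \emph{product} of $m$ real hypersurfaces; each is a genuine hypersurface because, letting $b_j$ vary with $\innerp{\vu,\ba_j}\neq0$ fixed, the defining real polynomial is non-constant. Hence the fiber has dimension $m(2d+1)$, uniformly over $\{\vu\neq0\}$, and $\dim_\R\mathcal{V}\le 4d+m(2d+1)$. The crucial refinement is that \eqref{eq:badset} is invariant under \emph{two} independent real symmetries of $(\vu,\vv)$ that fix $(\A,\bb)$: the scaling $\vu\mapsto\lambda\vu$ for $\lambda\in\R\setminus\{0\}$, and the shift $\vv\mapsto\vv+\ii t\,\vu$ for $t\in\R$. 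The first is immediate; for the second, replacing $\vv$ adds $\ii t\,\innerp{\vu,\ba_j}\innerp{\ba_j,\vu}=\ii t\,\abs{\innerp{\ba_j,\vu}}^2$ inside the real part, which is purely imaginary and so contributes nothing. These two flows generate $2$-dimensional orbits (for $\vu\neq0$) that lie inside the fibers of $\pi_1$, whence $\dim_\R\pi_1(\mathcal{V})\le\dim_\R\mathcal{V}-2\le(4d-2)+m(2d+1)$. Finally $(4d-2)+m(2d+1)<2m(d+1)=m(2d+1)+m$ is equivalent to $m>4d-2$, i.e.\ $m\geq 4d-1$, which bounds the bad set inside a proper subvariety and establishes the theorem for every such $m$.

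The step I expect to be the crux is this base reduction, and specifically the second symmetry $\vv\mapsto\vv+\ii t\,\vu$. Using only the obvious scaling $\vu\mapsto\lambda\vu$ one lowers the parameter dimension by just one, obtaining the weaker $m\geq 4d$ — which is no improvement over lifting to $\tilde\vx=(\vx,1)^\top\in\C^{d+1}$ and quoting the $4(d+1)-4=4d$ bound of \cite{CEHV13}. Detecting the extra imaginary shift, which works precisely because the real part annihilates $\ii\,\abs{\innerp{\ba_j,\vu}}^2$, is what removes one further unit and yields the sharp threshold $4d-1$. The remaining work is routine but should be carried out carefully: confirming that $\mathcal{V}$ is a genuine real quasi-projective variety so that the dimension theory and the ``projections do not increase dimension'' principle apply, and that the $2$-dimensional orbits are generic inside the $\pi_1$-fibers so the claimed drop by $2$ is valid.
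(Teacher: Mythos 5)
Your proposal is correct, and it takes a genuinely different route from the paper. The paper proves Theorem \ref{th:4d1} by reduction to the classical result of \cite{CEHV13}: it considers $N=m+1\geq 4(d+1)-4$ rows, uses genericity of classically phase retrievable frames in $\C^{N\times(d+1)}$, applies a nonsingular change of variables $P_0$ sending a suitably chosen last row $\vg_0$ to $\ve_{d+1}$, and observes that for vectors of the form $(\vx,1)^\top$ the measurement against $\ve_{d+1}^\top$ carries no information, so the first $m$ rows alone are affine phase retrievable (the pinned last coordinate removing the unimodular ambiguity). You instead run a self-contained incidence-variety dimension count over the non-injectivity witnesses $(\vu,\vv)$ of Theorem \ref{th:PRScomplex}~(C) --- the same machinery the paper itself uses for the sparse case in Theorem \ref{th:4k1} --- and sharpen the naive count (which only yields $m\geq 4d+1$) via the two real symmetries $\vu\mapsto\lambda\vu$ and $\vv\mapsto\vv+\ii t\vu$. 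Both symmetries are genuine: for the second, each equation changes by $\Re\bigl(\pm\ii t\abs{\innerp{\ba_j,\vu}}^2\bigr)=0$ exactly as you say, the orbits are $2$-dimensional whenever $\vu\neq 0$, and $(4d-2)+m(2d+1)<2m(d+1)$ is equivalent to $m>4d-2$, i.e.\ $m\geq 4d-1$, so you recover the sharp threshold. As for what each approach buys: the paper's proof is short but rests on the deep algebraic-geometric input of \cite{CEHV13} plus a somewhat delicate genericity-transfer step (the existence of $\vg_0$ and compatibility with $P_0^\top$), whereas yours is elementary given Theorem \ref{th:PRScomplex}~(C) and treats the plain and sparse settings by one unified method. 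Two small technical remarks. First, the drop $\dim_\R\pi_1(\mathcal{V})\leq\dim_\R\mathcal{V}-2$ needs slightly more than the paper's stated principle that projections cannot increase dimension, namely the standard semialgebraic fiber-dimension inequality (image dimension is at most source dimension minus minimal fiber dimension); note also that you do not need the orbits to be ``generic'' in the $\pi_1$-fibers --- it suffices that every point of every nonempty fiber lies on a $2$-dimensional orbit contained in that fiber, which forces each fiber to have dimension at least $2$. Second, you can bypass this stronger principle entirely by slicing the orbits at the outset: every bad $(\A,\bb)$ admits a normalized witness with $\|\vu\|=1$ and $\Im\innerp{\vu,\vv}=0$, so you may restrict the base of the incidence variety to this $(4d-2)$-dimensional set from the start and then invoke only the weaker projection principle, exactly as the paper does.
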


\begin{proof}
   Let $N=m+1$. Then $N \geq 4d = 4(d+1)-4$. Hence by \cite{CEHV13}, there is an open dense set of full measure $X\subset \C^{N \times (d+1)} $, such that any ${\bf F}\in X $ is linear phase retrievable in the classical sense. Write ${\bf F}=(\vf_1, \vf_2, \ldots, \vf_N)^\top$, where each $\vf_j \in \C^{d+1}$. For each $\vg\in\C^{d+1}$, denote $X_{\vg} :=\{{\bf F}=(\vf_1, \vf_2, \ldots, \vf_N)^\top \in X: ~\vf_N = \vg\}$. Then there exists a $\vg_0\in\C^{d+1}$, such that the projection of $X_{\vg_0}$ onto $\C^{(N-1)\times (d+1)}$ with the last row removed is a dense open set with full measure. Thus ${\bf F}=(\vf_1, \ldots, \vf_{N-1},\vg_0)^\top$ is  phase retrievable  in $\C^{d+1}$ in the classical sense for a generic $(\vf_1, \ldots, \vf_{N-1})^\top \in \C^{(N-1)\times (d+1)}$.

   Now let $P_0\in\C^{(d+1)\times (d+1)}$ be nonsingular such that $P_0\vg_0 = {\bf e}_{d+1}$. Then for any ${\bf F}\in X_{\vg_0}$, we have
$$
       {\bf G}:={\bf F}P_0^\top =(P_0\vf_1, \ldots, P_0\vf_{N-1},{\bf e_{d+1}})^\top
       =: (\vg_1, \ldots, \vg_{N-1},{\bf e}_{d+1})^\top.
$$
It is linear phase retrievable in the classical sense for generic $\vg_1, \ldots, \vg_{N-1}$.  In particular, any vector $\vy =(x_1, \ldots, x_d, 1)^\top$ can be recovered by $\abs{{\bf G}\vy}$, where $\abs{\cdot}$ means the entry-wise absolute value. However, note that the last entry of $\vy$ is 1, and the last row of ${\bf G}$ is ${\bf e}_{d+1}^\top$. So the measurement from last row provides no information. In other words, the above $\vy$ can be recovered exactly from the measurements provided by the first $N-1 = m$ rows of ${\bf G}$. This means precisely that the first $N-1 = m$ rows of ${\bf G}$ are affine phase retrievable. Let $(\A, \bb)$ denote the first $m$ rows of ${\bf G}$. It follows that $(\A,\bb)\in \C^{m \times (d+1)} $ is affine phase retrievable.  Therefore a generic  $(\A,\bb) \in \C^{m \times (d+1)} $ ($ m\geq 4d-1$) is affine phase retrievable.
\end{proof}

\subsection{Sparse complex affine phase retrieval}

We now focus on sparse affine phase retrieval  by proving that generic $(\A,\bb)$ is $s$-sparse affine phase retrievable  if $m\geq 4s+1$.

\begin{theorem}\label{th:4k1}
Let $m \geq 4s+1$. Then a generic $(\A,\bb) \in \C^{m\times (d+1)}$ is $s$-sparse affine phase retrievable .
\end{theorem}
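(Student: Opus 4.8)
The plan is to mirror the dimension-counting argument used for sparse real affine phase retrieval, but to track \emph{real} dimensions throughout, since the relevant objects are real-algebraic varieties. For index sets $I,J\subset\{1,\ldots,m\}$ with $\#I,\#J\le s$, I would call $(\A,\bb)$ \emph{not $(I,J)$-sparse affine phase retrievable} if there exist $\vx\neq\vy$ in $\C^d$ with $\supp(\vx)\subset I$, $\supp(\vy)\subset J$ and $\MM^2_{\A,\bb}(\vx)=\MM^2_{\A,\bb}(\vy)$, and define $\mathcal{A}_{I,J}\subset \C^{m\times(d+1)}\times\C^{\#I}\times\C^{\#J}$ to be the set of all $4$-tuples $(\A,\bb,\vx,\vy)$ satisfying these conditions. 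Recording each complex coordinate by its real and imaginary parts, $\mathcal{A}_{I,J}$ is a real quasi-projective variety. Since there are only finitely many pairs $(I,J)$, it suffices to show that for each such pair the image $\pi_1(\mathcal{A}_{I,J})$ of the projection $(\A,\bb,\vx,\vy)\mapsto(\A,\bb)$ is a proper (hence null) subvariety of $\C^{m\times(d+1)}\cong\R^{2m(d+1)}$; a generic $(\A,\bb)$ then avoids all of them and is $s$-sparse affine phase retrievable.

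The dimension bound comes from slicing by the other projection. By \eqref{eq:diff}, the condition $\MM^2_{\A,\bb}(\vx)=\MM^2_{\A,\bb}(\vy)$ is equivalent to the $m$ real equations $\Re\bigl(\innerp{\vx-\vy,\ba_j}(\innerp{\ba_j,\vx+\vy}+2b_j)\bigr)=0$, one per row $(\ba_j,b_j)\in\C^{d+1}$. Letting $\pi_2:\mathcal{A}_{I,J}\to\C^{\#I}\times\C^{\#J}$ send $(\A,\bb,\vx,\vy)\mapsto(\vx,\vy)$, the fiber over a fixed pair $\vx_0\neq\vy_0$ is a product over $j$ of the row-sets $\{(\ba_j,b_j):\Re(\cdots)=0\}$, so I would compute its dimension row by row. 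The crucial point, and the one that separates the complex threshold from the real one, is that each defining equation is a single \emph{real} equation, so it cuts the real dimension of its row $\C^{d+1}\cong\R^{2(d+1)}$ by exactly one rather than by two.

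Verifying this codimension-one claim is the main technical step, and I expect it to be the only real obstacle. With $\vx_0\neq\vy_0$ we have $\vu:=(\vx_0-\vy_0)/2\neq 0$, so $\innerp{\vu,\ba_j}$ is a nontrivial functional of $\ba_j$; fixing $\ba_j$ with $\innerp{\vu,\ba_j}\neq 0$ and varying $b_j$, the quantity $\Re\bigl(\innerp{\vu,\ba_j}(\innerp{\ba_j,\vv}+b_j)\bigr)$ becomes a surjective nonconstant real-affine function of $b_j\in\C$. Hence the defining real polynomial is not identically zero, and its zero set is a proper real-algebraic subset of $\R^{2(d+1)}$ of dimension at most $2(d+1)-1$. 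Consequently $\dim_{\R}\pi_2^{-1}\{(\vx_0,\vy_0)\}\le m\bigl(2(d+1)-1\bigr)=2m(d+1)-m$, and adding the $2\#I+2\#J\le 4s$ real dimensions of the base gives $\dim_{\R}\mathcal{A}_{I,J}\le 2m(d+1)-m+4s$.

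To finish, I would use that a polynomial (semialgebraic) projection cannot increase dimension, so $\dim_{\R}\pi_1(\mathcal{A}_{I,J})\le 2m(d+1)-m+4s=2m(d+1)-(m-4s)$. The hypothesis $m\ge 4s+1$ forces $m-4s\ge 1$, so this is strictly less than $2m(d+1)=\dim_{\R}\C^{m\times(d+1)}$, proving $\pi_1(\mathcal{A}_{I,J})$ is a proper subvariety and hence a null set, which completes the argument. The two points to handle with care are that all dimension bookkeeping must stay in the real-algebraic (semialgebraic) category—so I would cite real dimension theory and the non-increase of dimension under projection rather than their complex analogues—and the codimension-one count of the previous paragraph, which is precisely where the bound $4s$ rather than $2s$ originates.
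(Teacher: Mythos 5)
Your proposal is correct and takes essentially the same route as the paper: the same incidence variety $\mathcal{A}_{I,J}$ viewed as a real quasi-projective variety, the same row-by-row fiber bound $\dim_\R\pi_2^{-1}\{(\vx_0,\vy_0)\}\le 2m(d+1)-m$, the same use of non-increase of dimension under the projection $\pi_1$, and the same finite union over pairs $(I,J)$. The only cosmetic difference is in verifying the codimension-one claim: you argue the defining real polynomial is not identically zero in $(\ba_j,b_j)$ (hence its zero set has real codimension at least one), while the paper splits into the cases $\innerp{\vx-\vy,\ba_j}=0$ and $\innerp{\vx-\vy,\ba_j}\neq 0$; both give the identical bound.
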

\begin{proof}
    The proof here is very similar to the proof in the real case. The set of all $(\A,\bb)$ has real dimension $\dim_\R(\C^{m\times (d+1)})=2m(d+1)$. The goal is to show that the the set of $(\A,\bb)$'s that are not $s$-sparse affine phase retrievable  lies in a finite union of subsets, each of which is a projection of real hypersurfaces of dimension strictly less than $2m(d+1)$. This would yield our result.

   For any subset of indices $I,J\subset \{1,\ldots,m\}$ with $\#I ,\#J\leq s$, we say $(\A,\bb) \in \C^{m\times (d+1)}$ is not $(I,J)$-sparse affine phase retrievable  if there exist $\vx\neq \vy$ in $\C^d$ such that
\begin{equation}  \label{eq:IJsparse}
    {\rm supp}(\vx)\subset I, \shsp {\rm supp}(\vy)\subset J, \shsp \mbox{and}\shsp \MM_{\A,\bb}^2(\vx) = \MM_{\A,\bb}^2(\vy).
\end{equation}
Let $ \mathcal{A}_{I,J} $ denote the set of all 4-tuples $(\A, \bb, \vx,\vy)$ satisfying (\ref{eq:IJsparse}) and $\vx\neq \vy$. Then
$$
      \mathcal{A}_{I,J} \subset \C^{m\times (d+1)}\times \C^{\# I}\times \C^{\# J},
$$
where we view $(\A,\bb)$ as an element of $ \C^{m\times (d+1)}$.
For our proof we shall identify $\C^{m\times (d+1)}\times \C^{\# I}\times \C^{\# J}$ with $\R^{m\times 2(d+1)}\times \R^{2\# I}\times \R^{2\# J}$. In this case $ \mathcal{A}_{I,J} $ is a well-defined real quasi-projective variety (\!\!\!\cite[Page 18]{alge}). Note that $\MM_{\A,\bb}^2(\vx) = \MM_{\A,\bb}^2(\vy)$ yields $|\innerp{\ba_j, \vx}+b_j|^2 = |\innerp{\ba_j, \vy}+b_j|^2$ for all $1\leq j\leq m$, where $\A=(\ba_1, \ba_2, \ldots, \ba_m)^\top$ and $\bb=(b_1, \ldots, b_m)^\top$. By (\ref{eq:diff}), this is equivalent to
\begin{equation} \label{eq:IJ-diff}
   \Re\Bigl( \innerp{ {\bf x}-{\bf y}, \ba_j}(\innerp{\ba_j,{\bf x}+{\bf y}}+2b_j)\Bigr) = 0, \mhsp j=1,2, \ldots, m.
\end{equation}
Fix any $j$, the above equation holds if and only if
\begin{itemize}
\item $\innerp{ {\bf x}-{\bf y}, \ba_j}=0$; {\rm or }
\item $\innerp{ {\bf x}-{\bf y}, \ba_j}\neq 0$ but (\ref{eq:IJ-diff}) holds.
\end{itemize}
Thus for any $\vx\neq \vy$, the first condition requires $\ba_j$ to lie on a hyperplane, which has real co-dimension 2, while the second condition requires $b_j$ to be on a line in $\C$ (depending on $\vx,\vy,\ba_j$). Overall, for any given $\vx \neq \vy$, these two conditions constraint the $j$-th row of $(\A,\bb)$ to lie on a real projective variety of codimension 1.
We shall use $X_j(\vx,\vy)$ to denote this variety. Now let $\pi_2: {\mathcal A}_{I,J} \longrightarrow \C^d \times \C^d$ be the projection $(\A,\bb,\vx,\vy) \mapsto (\vx,\vy)$ onto the last two coordinates. Then for any $\vx_0 \neq \vy_0$ in $\C^d$, we have
$$
   \pi_2^{-1}\{(\vx_0,\vy_0)\} = X_1(\vx_0,\vy_0) \times X_2(\vx_0,\vy_0) \times \ldots \times X_m(\vx_0,\vy_0)\times \{\vx_0\}\times \{\vy_0\}.
$$
Hence the real dimension of $\pi_2^{-1}\{(\vx_0,\vy_0)\}$ is
$$
      \dim_\R\left(\pi_2^{-1}\{(\vx_0,\vy_0)\} \right) = 2m(d+1) - m = 2md + m.
$$
It follows that $\dim_\R({\mathcal A}_{I,J}) \leq 2md+m +2\# I+2\# J \leq 2md+m +4s$.

We now let $\pi_1:  {\mathcal A}_{I,J} \longrightarrow \C^{m\times (d+1)}$ be the projection $(\A,\bb,\vx,\vy) \mapsto (\A,\bb)$. Since projections cannot increase the dimension of a variety, we know that
$$
      \dim_\R\left(\pi_1( {\mathcal A}_{I,J})\right) \leq 2md+m +4s = 2m(d+1) +4s-m<2m(d+1).
$$
However, $\pi_1( {\mathcal A}_{I,J})$ contains precisely those $(\A,\bb)$ in $\C^{m\times (d+1)}$ that are not $(I,J)$-sparse affine phase retrievable. Thus a generic $(\A,\bb)\in\C^{m\times (d+1)}$ is $(I,J)$-sparse affine phase retrievable.

Finally, there are only finitely many indices subsets $I,J$. Hence a generic $(\A,\bb)\in\C^{m\times (d+1)}$ ($m\geq 4s+1$) is $(I,J)$-sparse affine phase retrievable  for any $I, J$ with $\# I, \# J \leq s$. The theorem is proved.
\end{proof}

\section{Stability and Robustness of Affine Phase Retrieval}
\setcounter{equation}{0}

Stability and robustness are important properties for affine phase retrieval. For the standard  phase retrieval, stability and robustness have been studied in several papers, see e.g. \cite{BaWaSt,BCMN, CCD, GWaXu16}. In this section, we establish stability and robustness results for both maps $\MM_{\A,\bb}$ and $\MM^2_{\A,\bb}$.

\begin{theorem}  \label{th:Stability}
Assume that $(\A,\bb) \in \H^{m \times (d+1)}$ is affine phase retrievable.
 Assume that $\Omega\subset \H^d$ is a   compact set.
Then there exist positive constants $C_1,  C_2, c_1, c_2$ depending on $(\A,\bb)$ and $\Omega$ such that  for any $\vx, \vy \in \Omega $, we have
\begin{align}
     \frac{c_1}{1+\|\vx\|+\|\vy\|} \,\|\vx-\vy\|
     &\leq \left\|\MM_{\A,\bb}(\vx) - \MM_{\A,\bb}(\vy)\right\| \leq  C_1 \|\vx-\vy\|,\label{eq:Lip1}\\
     {c_2}\|\vx-\vy\|
         &\leq \left\|\MM^2_{\A,\bb}(\vx) - \MM^2_{\A,\bb}(\vy)\right\|\leq
          C_2({1+\|\vx\|+\|\vy\|}) \|\vx-\vy\|. \label{eq:Lip2}
\end{align} 
\end{theorem}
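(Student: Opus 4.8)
The plan is to establish the two upper bounds directly (they in fact hold for all $\vx,\vy\in\H^d$ with constants depending only on $(\A,\bb)$), then to prove the lower bound for $\MM^2_{\A,\bb}$ by a compactness argument, and finally to deduce from it the lower bound for $\MM_{\A,\bb}$. First I would dispose of the upper bounds. For the right inequality in \eqref{eq:Lip1}, the reverse triangle inequality gives $\abs{\,\abs{\innerp{\ba_j,\vx}+b_j}-\abs{\innerp{\ba_j,\vy}+b_j}\,}\le \abs{\innerp{\ba_j,\vx-\vy}}\le \|\ba_j\|\,\|\vx-\vy\|$, and summing the squares over $j$ yields the bound with $C_1=(\sum_{j}\|\ba_j\|^2)^{1/2}$. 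For the right inequality in \eqref{eq:Lip2}, I would start from the identity \eqref{eq:deng} (and \eqref{eq:diff} in the complex case), so that the $j$-th coordinate of $\MM^2_{\A,\bb}(\vx)-\MM^2_{\A,\bb}(\vy)$ is bounded in modulus by $\|\ba_j\|\,\|\vx-\vy\|\,(\|\ba_j\|(\|\vx\|+\|\vy\|)+2\abs{b_j})$; factoring out $(1+\|\vx\|+\|\vy\|)$ gives the stated estimate with a $C_2$ depending only on $\max_j\|\ba_j\|$ and $\max_j\abs{b_j}$.

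The heart of the matter is the lower bound in \eqref{eq:Lip2}, and here I would argue by contradiction, exploiting the compactness of $\Omega$. Suppose no such $c_2>0$ exists; then there are sequences $\vx_n\ne\vy_n$ in $\Omega$ with $r_n:=\|\MM^2_{\A,\bb}(\vx_n)-\MM^2_{\A,\bb}(\vy_n)\|/\|\vx_n-\vy_n\|\to 0$. Set $\vu_n:=(\vx_n-\vy_n)/\|\vx_n-\vy_n\|$. Passing to a subsequence, the compactness of $\Omega$ and of the unit sphere yields $\vx_n\to\vx^*$, $\vy_n\to\vy^*$ in $\Omega$ and $\vu_n\to\vu^*$ with $\|\vu^*\|=1$. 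Dividing \eqref{eq:deng} by $\|\vx_n-\vy_n\|$ shows that the $j$-th coordinate of the vector whose norm is $r_n$ equals $\innerp{\ba_j,\vu_n}(\innerp{\ba_j,\vx_n+\vy_n}+2b_j)$ (in the complex case, the real part of the corresponding expression in \eqref{eq:diff}), which converges to $2\innerp{\ba_j,\vu^*}(\innerp{\ba_j,\vv^*}+b_j)$ (resp.\ its real part), where $\vv^*:=(\vx^*+\vy^*)/2$. Hence $r_n$ tends to $2$ times the norm of the vector with these coordinates, and $r_n\to0$ forces every coordinate to vanish. Since $\vu^*\neq0$, this contradicts condition (C) of Theorem \ref{le:real} (resp.\ Theorem \ref{th:PRScomplex}). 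I expect this step to be the main obstacle: the degenerate regime $\vy_n\to\vx_n$ is where the difference quotient collapses to a directional derivative, and it is precisely affine phase retrievability, encoded in (C), that keeps it uniformly away from zero; this also explains why compactness is genuinely needed and why the inequality fails on all of $\H^d$.

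Finally I would bootstrap the lower bound in \eqref{eq:Lip1} from the one just proved. Writing $p_j(\vx):=\abs{\innerp{\ba_j,\vx}+b_j}$, the $j$-th coordinates factor as $\MM^2_{\A,\bb}(\vx)_j-\MM^2_{\A,\bb}(\vy)_j=(p_j(\vx)-p_j(\vy))(p_j(\vx)+p_j(\vy))$, and since $p_j(\vx)+p_j(\vy)\le \|\ba_j\|(\|\vx\|+\|\vy\|)+2\abs{b_j}\le C(1+\|\vx\|+\|\vy\|)$, a coordinatewise estimate followed by summation gives $\|\MM^2_{\A,\bb}(\vx)-\MM^2_{\A,\bb}(\vy)\|\le C(1+\|\vx\|+\|\vy\|)\,\|\MM_{\A,\bb}(\vx)-\MM_{\A,\bb}(\vy)\|$. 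Combining this with $c_2\|\vx-\vy\|\le\|\MM^2_{\A,\bb}(\vx)-\MM^2_{\A,\bb}(\vy)\|$ and setting $c_1:=c_2/C$ yields the left inequality in \eqref{eq:Lip1}, completing the proof.
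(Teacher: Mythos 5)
Your proposal is correct, and while the upper bounds and the bootstrap from \eqref{eq:Lip2} to the lower bound of \eqref{eq:Lip1} coincide with the paper's argument, your proof of the key estimate $c_2\|\vx-\vy\|\le\|\MM^2_{\A,\bb}(\vx)-\MM^2_{\A,\bb}(\vy)\|$ takes a genuinely different route. The paper lifts the problem to matrix space: writing $\MM^2_{\A,\bb}(\vx)={\bf T}(\tilde\vx\tilde\vx^*)$ with ${\bf T}$ linear and $\tilde\vx=(\vx^\top,1)^\top$, it introduces the compact set $\tilde\Theta_\Omega$ of unit-norm matrices of the form $\bigl(\begin{smallmatrix}\vz\vw^*+\vw\vz^* & \vz\\ \vz^* & 0\end{smallmatrix}\bigr)$ with $\vw\in(\Omega+\Omega)/2$, observes that injectivity of $\MM^2_{\A,\bb}$ forces ${\bf T}$ to be nonvanishing on this set, and takes $c_2$ to be the (positive) infimum of $\|{\bf T}({\bf S})\|$ over it; the bound $\|\tilde\vx\tilde\vx^*-\tilde\vy\tilde\vy^*\|_F\ge\|({\tilde\vx\tilde\vx^*-\tilde\vy\tilde\vy^*}){\bf e}_{d+1}\|=\|\vx-\vy\|$ then finishes. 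You instead argue by sequential compactness and contradiction: normalizing the difference direction $\vu_n=(\vx_n-\vy_n)/\|\vx_n-\vy_n\|$ and passing to limits, the vanishing difference quotient produces a pair $(\vu^*,\vv^*)$ with $\vu^*\neq 0$ violating condition (C) of Theorem \ref{le:real} (resp.\ Theorem \ref{th:PRScomplex}). The two are morally dual: the paper's structured matrices $\vz\vw^*+\vw\vz^*$ are exactly the lifted encoding of the pair $(\vu,\vv)$ in condition (C), so both proofs handle the degenerate regime $\vy_n\to\vx_n$ by the same underlying mechanism. What each buys: the paper's lift gives a variational description of $c_2$ as an infimum of a continuous function over an explicit compact set and reuses the linearity of ${\bf T}$ for the upper bound in \eqref{eq:Lip2}; your argument avoids the rank-one lifting machinery altogether, is more elementary, and makes transparent that condition (C) is precisely the limiting form of injectivity that prevents the difference quotient from collapsing---at the price of being purely non-constructive (the constant $c_2$ is obtained only by contradiction rather than as an explicit infimum).
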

\begin{proof}
Write $\A=(\ba_1, \ldots, \ba_m)^\top$ and $\bb = (b_1, \ldots, b_m)^\top$. We first establish the inequality for the map  $\MM_{\A,\bb}^2(\vx)$, where we recall
$$
\MM_{\A,\bb}^2(\vx)=\left(\abs{\innerp{\ba_1,\vx}+b_1}^2,\ldots, \abs{\innerp{\ba_m,\vx}+b_m}^2\right).
$$
Denote the matrix $(\A, \bb) \in \H^{m \times (d+1)}$ by $(\A,\bb) = (\tilde\ba_1, \ldots, \tilde\ba_m)^\top$, where
   ${\tilde \ba}_j:=\left(
\begin{array}{c}
\ba_j \\ b_j
\end{array}
\right), \,j=1,\ldots,m$.
 Similarly we augment $\vx, \vy\in\H^d$ into $\tilde\vx, \tilde\vy\in\H^{d+1}$ by appending 1 to the $(d+1)$-th entry. Now we have
\begin{align*}
   \MM_{\A,\bb}^2(\vx)&=(\abs{\innerp{\ba_1,\vx}+b_1}^2,\ldots, \abs{\innerp{\ba_m,\vx}+b_m}^2)\\
            &=\left(\tr({\tilde{\ba}_1\tilde{\ba}_1^*\tilde \vx\tilde \vx^*}),\ldots, \tr({\tilde {\ba}_m\tilde {\ba}_m^* \tilde \vx\tilde \vx^*})\right)
             =:{\bf T} (\tilde \vx \tilde \vx^*),
\end{align*}
 where  ${\bf T}$ is a linear transformation from $\H^{(d+1)\times (d+1)}$ to $\R^m$.


 Let $X_\Omega =\{\tilde\vx\tilde\vx^*\in\H^{(d+1)\times (d+1)}: \ \vx\in \Omega\}$, 
$$\Theta_\Omega=\{{\bf S} \in\H^{(d+1)\times (d+1)}: \ \|{\bf S}\|_F=1,\  t{\bf S}\in X_\Omega-X_\Omega\ {\rm for \ some} \ t>0\}$$
and
$$ \tilde \Theta_\Omega=\left\{ {\bf S}:=
\left(\begin{array}{cc} \vz \vw^*+\vw \vz^*  & \vz\\
\vz^* & 0\end{array}\right):\    \vz\in \H^d, \vw\in (\Omega+\Omega)/2\ {\rm and}\  \|{\bf S}\|_F=1 \right\}, $$
where $\|\cdot \|_F$ denotes the $l^2$-norm (Frobenius norm) of a matrix.
Then
\begin{equation} \label{eq110}
\Theta_\Omega\subset \tilde \Theta_\Omega
\end{equation}
because
\begin{equation*}
 {\bf S}=t^{-1}  (\tilde \vx \tilde \vx^*-\tilde \vy\tilde \vy^*)=
\left(\begin{array}{cc} \vz \vw^*+\vw \vz^*  & \vz\\
\vz^* & 0\end{array}\right)\in \tilde \Theta_\Omega \ \ {\rm for \ all} \ {\bf S}\in \Theta_\Omega,\end{equation*}
where the existence of $t>0, \vx, \vy\in \Omega$  in the first equality follows from the definition of
$\Theta_\Omega$ and the second equality holds for
 $\vz=(\vx-\vy)/t$ and $\vw=(\vx+\vy)/2$.

For any  ${\bf S}= \left(\begin{array}{cc} \vz \vw^*+\vw \vz^*  & \vz\\
\vz^* & 0\end{array}\right)\in \tilde\Theta_\Omega $,
 we have
\begin{equation} \label{eq111}
{\bf T}({\bf S})  =
{\bf T}\left(\begin{array}{cc} \vx\vx^*-\vy\vy^*  & \vx-\vy\\
\vx^*-\vy^* & 0\end{array}\right) =  {\bf T}(  \widetilde {\vx} \widetilde{\vx}^*-
\tilde {\vy} \tilde{\vy}^*)
 = {\bf  M}_{{\bf A}, {\bf b}}^2(\vx)-{\bf M}_{{\bf A},{\bf  b}}^2(\vy)\ne 0
\end{equation}
by the affine phase retrievability of $({\bf A},{\bf b})$, where
  $\vx= \vw+\vz/2$ and $\vy=\vw-\vz/2$.
Clearly $\tilde \Theta_\Omega$ is a compact set. This together with \eqref{eq111}
implies that
\begin{equation}\label{eq1102}
c_2:=\inf_{{\bf S}\in \tilde \Theta_\Omega}\|{\bf T} ({\bf S})\|>0.
\end{equation}
Therefore
\begin{eqnarray}\label{eq1103}
& &    \left\|\MM_{\A,\bb}^2(\vx)-\MM_{\A,\bb}^2(\vy)\right\|
     =   \|{\bf T}({\tilde \vx}{\tilde \vx}^*-{\tilde \vy}{\tilde \vy}^*)\|\nonumber\\
    & \ge & \Big(\inf_{{\mathbf S}\in \Theta_\Omega} \|{\bf T}({\mathbf S})\|\Big)   \|{\tilde \vx}{\tilde \vx}^*-{\tilde \vy}{\tilde \vy}^*\|_F
    \ge  
 c_2 \|{\tilde \vx}{\tilde \vx}^*-{\tilde \vy}{\tilde \vy}^*\|_F,
\end{eqnarray}
where the last inequality holds by \eqref{eq110}.


Now for the unit vector ${\mathbf e}_{d+1}$, we have
$$
  \|{\tilde \vx}{\tilde \vx}^*-{\tilde \vy}{\tilde \vy}^*\|_F\ge \left\|({\tilde \vx}{\tilde \vx}^*-{\tilde \vy}{\tilde \vy}^*){\mathbf e}_{d+1}\right\|
    =\|\tilde \vx - \tilde\vy\| = \|\vx-\vy\|.
$$
 This, together with \eqref{eq1102}
and \eqref{eq1103},
 establishes the lower bound in (\ref{eq:Lip2}).

Because $\MM^2_{\A,\bb}(\vx)$ is linear in $X={\tilde \vx}{\tilde \vx}^*$, we must also have
$$
   \left\|\MM_{\A,\bb}^2(\vx)-\MM_{\A,\bb}^2(\vy)\right\|
         \leq C_2' \|{\tilde \vx}{\tilde \vx}^*-{\tilde \vy}{\tilde \vy}^*\|_F.
$$
However using the standard estimate, we have
\begin{align*}
     \|{\tilde \vx}{\tilde \vx}^*-{\tilde \vy}{\tilde \vy}^*\|_F
          \leq \|\tilde\vx\|\,\|\tilde\vx-\tilde\vy\| +\|\tilde\vy\|\,\|\tilde\vx-\tilde\vy\|
          \leq 2(1+\|\vx\| +\|\vy\|)\,\|\vx-\vy\|.
\end{align*}
Here we have used the facts that $\|\tilde\vx-\tilde\vy\| =\|\vx-\vy\|$ and $\|\tilde\vx\|\leq 1+\|\vx\|$. Taking $C_2 = 2C_2'$ yields the upper bound in (\ref{eq:Lip2}).

We now prove the inequalities for $\MM_{\A,\bb}$. The upper bound in (\ref{eq:Lip1}) is straightforward. Note that
$$
    \Bigl|\abs{\innerp{\ba_j,\vx}+b_j}-\abs{\innerp{\ba_j,\vy}+b_j}\Bigr|
    \leq |\innerp{\ba_j,\vx-\vy}| \leq \|\ba_j\|\,\|\vx-\vy\|.
$$
It follows that
$$
\left\|\MM_{\A,\bb}(\vx)-\MM_{\A,\bb}(\vy)\right\| \leq \Bigl(\sum_{j=1}^m \|\ba_j\|\Bigr)\,\|\vx-\vy\|.
$$
The upper bound in (\ref{eq:Lip1}) thus follows by letting $C_1 = \sum_{j=1}^m \|\ba_j\|$.

To prove the lower bound, we observe that
\begin{align*}
    \Bigl|\abs{\innerp{\ba_j,\vx}+b_j}^2-\abs{\innerp{\ba_j,\vy}+b_j}^2\Bigr|
    & =\Bigl|\abs{\innerp{\ba_j,\vx}+b_j}-\abs{\innerp{\ba_j,\vy}+b_j}\Bigr|
        (\abs{\innerp{\ba_j,\vx}+b_j}+\abs{\innerp{\ba_j,\vy}+b_j}) \\
    &\leq L(1+\|\vx\| +\|\vy\|) \Bigl|\abs{\innerp{\ba_j,\vx}+b_j}-\abs{\innerp{\ba_j,\vy}+b_j}\Bigr|,
\end{align*}
where $L>0$ is a constant depending only on $(\A,\bb)$. Hence
$$
   \left\|\MM^2_{\A,\bb}(\vx)-\MM^2_{\A,\bb}(\vy)\right\|
   \leq L(1+\|\vx\| +\|\vy\|) \left\|\MM_{\A,\bb}(\vx)-\MM_{\A,\bb}(\vy)\right\|.
$$
It now follows from the lower bound $\|\MM^2_{\A,\bb}(\vx)-\MM^2_{\A,\bb}(\vy)\| \geq c_2\|\vx-\vy\|$ and setting $c_2 = c_1/L$ that
$$
   \left\|\MM_{\A,\bb}(\vx) - \MM_{\A,\bb}(\vy)\right\|
        \geq \frac{c_1}{1+\|\vx\|+\|\vy\|} \,\|\vx-\vy\|.
$$
The theorem is proved.
\end{proof}

\begin{prop} \label{coro:bi-Lip}
  Neither  $\MM_{\A,\bb}$ nor   $\MM^2_{\A,\bb}$  is bi-Lipschitz on $\H^d$.
\end{prop}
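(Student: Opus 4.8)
The plan is to exhibit, for each map, a sequence of point-pairs escaping to infinity that violates one of the two bi-Lipschitz bounds, with the two maps failing for opposite reasons. The map $\MM^2_{\A,\bb}$ grows quadratically and hence cannot admit a global \emph{upper} Lipschitz bound, whereas $\MM_{\A,\bb}$ \emph{is} globally Lipschitz (the upper bound in \eqref{eq:Lip1} holds with constant $\sum_{j}\|\ba_j\|$, independent of any compact set) but fails to admit a global \emph{lower} bound. These constructions also show that the norm-dependent factors $1+\|\vx\|+\|\vy\|$ appearing in \eqref{eq:Lip1} and \eqref{eq:Lip2} are genuinely necessary. In both arguments I will use that affine phase retrievability forces $\rank(\A)=d$: otherwise $\A\vu_0=\vO$ for some $\vu_0\neq\vO$, and then $\MM_{\A,\bb}(\vx)=\MM_{\A,\bb}(\vx+\vu_0)$ for every $\vx$, contradicting injectivity. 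Thus I may fix any $\vu\neq\vO$, and since $\A\vu\neq\vO$ there is an index $j_0$ with $c_{j_0}:=\innerp{\ba_{j_0},\vu}\neq 0$.

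For $\MM^2_{\A,\bb}$ I would set $\vy=\vO$ and $\vx_n=n\vu$. The $j_0$-th coordinate of $\MM^2_{\A,\bb}(\vx_n)-\MM^2_{\A,\bb}(\vO)$ equals $\abs{n c_{j_0}+b_{j_0}}^2-\abs{b_{j_0}}^2$, whose leading term is $n^2\abs{c_{j_0}}^2$; hence $\|\MM^2_{\A,\bb}(\vx_n)-\MM^2_{\A,\bb}(\vO)\|$ is of order $n^2$ while $\|\vx_n-\vO\|=n\|\vu\|$. The ratio $\|\MM^2_{\A,\bb}(\vx_n)-\MM^2_{\A,\bb}(\vO)\|/\|\vx_n-\vO\|$ therefore diverges, ruling out any global upper Lipschitz bound and with it the bi-Lipschitz property.

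For $\MM_{\A,\bb}$ the obstruction sits at the lower bound, and here a fixed perturbation will not suffice: on any region where the signs (in the complex case, the phases) of the $\innerp{\ba_j,\vx}+b_j$ are frozen the map is affine and bi-Lipschitz, so I must send \emph{both} points to infinity in opposite directions. I would take $\vx_n=n\vu$ and $\vy_n=-n\vu$. Writing $c_j=\innerp{\ba_j,\vu}$, the $j$-th coordinates of $\MM_{\A,\bb}(\vx_n)$ and $\MM_{\A,\bb}(\vy_n)$ are $\abs{n c_j+b_j}$ and $\abs{{-}n c_j+b_j}=\abs{n c_j-b_j}$, whose arguments differ by exactly $2b_j$. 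The reverse triangle inequality then gives $\big|\,\abs{nc_j+b_j}-\abs{nc_j-b_j}\,\big|\le 2\abs{b_j}$ uniformly in $n$, so $\|\MM_{\A,\bb}(\vx_n)-\MM_{\A,\bb}(\vy_n)\|\le 2\|\bb\|$ stays bounded, while $\|\vx_n-\vy_n\|=2n\|\vu\|\to\infty$. Hence the ratio tends to $0$ and no positive lower Lipschitz constant exists.

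The one point worth stressing — and the place where the right idea must be found rather than computed — is this last mechanism: although the individual magnitudes $\abs{nc_j\pm b_j}\sim n\abs{c_j}$ blow up, their difference is controlled by the bounded shift $2b_j$, because reflecting $\vx$ through the origin flips only the sign of the linear part and leaves the shift untouched. This is precisely why the reflection $\vy_n=-\vx_n$, rather than any bounded perturbation of $\vx_n$, is the correct construction. Both computations go through verbatim for $\H=\R$ and $\H=\C$, giving the claim for all affine phase retrievable $(\A,\bb)$.
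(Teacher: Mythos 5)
Your proposal is correct and follows essentially the same route as the paper: for $\MM^2_{\A,\bb}$ you make explicit the quadratic-growth obstruction that the paper merely asserts, and for $\MM_{\A,\bb}$ you use exactly the paper's construction of antipodal points $\pm n\vu$ escaping to infinity (the paper rescales to show the ratio in its equation \eqref{eq:rt} tends to $0$, whereas you bound the numerator by $2\|\bb\|$ via the reverse triangle inequality --- a cosmetic difference). Both arguments are sound as written.
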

\begin{proof}
The map $\MM^2_{\A,\bb}(\vx)$ is not bi-Lipschitz follows from the simple observation that it is quadratic in $\vx$ (more precisely, in $\Re(\vx)$ and $\Im(\vx)$). No quadratic function can be bi-Lipschitz on the whole Euclidean space.

To see $\MM_{\A,\bb}(\vx)$ is not bi-Lipschitz, we fix a nonzero $\vx_0\in \H^d$. Take $\vx=r \vx_0$ and $\vy=-r \vx_0$, where $r>0$. Note that
\[
   \|\MM_{\A,\bb}(\vx)-\MM_{\A,\bb}(\vy)\|=\Bigl(\sum_{j=1}^m (\abs{r \innerp{\ba_j,\vx_0}+b_j}-\abs{r \innerp{\ba_j,\vx_0}-b_j})^2\Bigr)^{1/2}
\]
and
\[
\|\vx-\vy\|\,\,=\,\, 2 r\|\vx_0\|  .
\]
Then
\begin{equation}\label{eq:rt}
\frac{\|\MM_{\A,\bb}(\vx)-\MM_{\A,\bb}(\vy)\|}{\|\vx-\vy\|}=\frac{1}{2\|\vx_0\|}
\Bigl(\sum_{j=1}^m (\abs{ \innerp{\ba_j,\vx_0}+b_j/r}-\abs{ \innerp{\ba_j,\vx_0}-b_j/r})^2\Bigr)^{1/2}.
\end{equation}
A simple observation is that the right side of (\ref{eq:rt}) tending to 0 as $r \rightarrow \infty$.  Hence for any $\delta >0$, we can choose $r$ large enough so that
\[
\frac{\|\MM_{\A,\bb}(\vx)-\MM_{\A,\bb}(\vy)\|}{\|\vx-\vy\|}\leq \delta.
\]
Thus $\MM_{\A,\bb}(\vx)$ is not bi-Lipschitz.
\end{proof}

\bibliographystyle{plain}

\end{document}